\newtheorem{thrm}{Theorem}[section]
\newtheorem{lemma}[thrm]{Lemma}
\newtheorem{corollary}[thrm]{Corollary}
\newtheorem{proposition}[thrm]{Proposition}
\newtheorem{definition}[thrm]{Definition}
\newtheorem{example}[thrm]{Example}
\newtheorem{fact}[thrm]{Fact}
\newcommand{\bw}{\mathbf{w}}
\newcommand{\emptyword}{\varepsilon}
\newenvironment{proof2}[1]{\par\noindent \textit{Proof of #1.} \rm}{} 
\title{On Quasiperiodic Morphisms}
\author{F. Lev\'e\footnote{Laboratoire MIS, 33 rue Saint Leu, 80039
    Amiens Cedex 1 - France}, 
G. Richomme\footnote {LIRMM (CNRS, Univ. Montpellier 2) - UMR 5506 -
  CC 477, 161 rue Ada, 34095, Montpellier Cedex 5 - France}
\footnote {Univ. Paul-Val\'ery Montpellier 3, Dpt MIAp, Route de Mende, 34199 Montpellier Cedex 5, France}
}
\date{}
\begin{document}
\maketitle

\begin{abstract}
Weakly and strongly quasiperiodic morphisms are tools introduced to study quasiperiodic words. 
Formally they map respectively at least one or any non-quasiperiodic word to a quasiperiodic word. Considering them both on finite and infinite words, we get four families of morphisms between which we study relations. We provide algorithms to decide whether a morphism is strongly quasiperiodic on finite words or on infinite words.
\end{abstract}

\section{Introduction}

The notion of quasiperiodicity we consider in this paper is the one introduced in the area of Text Algorithms by Apostolico and Ehrenfeucht \cite{ApostolicoEhrenfeucht1993TCS} in the following way: ``a
string $w$ is quasiperiodic if there is a second string $u \neq w$
such that every position of $w$ falls within some occurrence of $u$ in
$w$''. In 1994, Marcus extended this notion to right infinite words and he opened six questions. Four of them were answered in \cite{LeveRichomme2004BEATCS} (see also \cite{MarcusMonteil2004arxiv}). 
In particular, we proved the existence of a Sturmian word which is not quasiperiodic.

In \cite{LeveRichomme2007TCS}, we proved that a Sturmian word is not quasiperiodic if and only if it is an infinite Lyndon word.
The proof of this result was based on the $S$-adicity of Sturmian words
(Sturmian words form a family of non-periodic words 
that can be infinitely decomposed over four basic morphisms -- see \cite{BerstelSeebold2002Lothaire} for more properties on Sturmian words) and on a characterization of morphisms that preserve Lyndon words \cite{Richomme2003BBMS}. In \cite{LeveRichomme2007TCS}, we introduced strongly quasiperiodic morphisms as those morphisms that map all infinite words to quasiperiodic ones, and weakly quasiperiodic morphisms that map at least one non-quasiperiodic word to a quasiperiodic one. We characterized Sturmian morphisms that are strongly quasiperiodic and those that are not weakly quasiperiodic.

With Glen \cite{GlenLeveRichomme2008TCS}, previous results were extended to the class of episturmian words. All quasiperiodic episturmian words were characterized (unlike the Sturmian case, they do not correspond to infinite episturmian Lyndon words). Two proofs were provided for this result. The first one used connections between quasiperiodicity and return words, the second one used $S$-adic decompositions of episturmian words, and a characterization of strongly quasiperiodic on infinite words episturmian morphisms.

Observe that strongly and weakly quasiperiodic morphisms were considered in the context of infinite words. In this paper we consider also these morphisms with respect to finite words. 
After basic definitions (Sect.~\ref{sec:def}), in Sect.~\ref{sec:relations}, we study existing relations between the four so-defined families of morphisms. Algorithms to check if a morphism is strongly quasiperiodic are provided in Sect.~\ref{sec:decision_wqf} and \ref{sec:decision_wqi}. In Sect.~\ref{sec:weakly}, we provide sufficient conditions for a morphism to be weakly quasiperiodic on infinite words.

\section{Quasiperiodic Words and Morphisms\label{sec:def}}

We assume readers are familiar with combinatorics on words, morphisms and automata (see for instance \cite{Lothaire2002book}). 
We let $\varepsilon$ denote the empty word, $|w|$ denote the length of a word $w$, and $|w|_a$ denote the number of occurrences of a letter $a$ in $w$.
Let us recall that, if some words $w$, $u$, $p$ and $s$ verify $w =
ups$, then $p$ is called a \textit{prefix} of $w$, $s$ a
\textit{suffix} of $w$ and $u$ a \textit{factor} of $w$. A factor,
prefix or suffix is said to be \textit{proper} if it differs from the
whole word. An \textit{internal} factor of a word is any occurrence of
a factor except its prefixes and suffixes. For a word $u$ and an
integer $k$, $u^k$  denotes the word obtained by concatenating $k$
copies of $u$ and $u^\omega$ denotes the infinite periodic word obtained by concatenating infinitely many copies of $u$.

Given a non-empty word $q$, $q$-quasiperiodic words (or strings) are defined in the introduction. Equivalently a finite word $w$ is $q$-\textit{quasiperiodic} 
if $w \neq q$ and there exist words $p$, $s$ and $u$ such that 
$w = qu$, $q = ps$, $p \neq \varepsilon$, and $su = q$ or $su$ is a
$q$-quasiperiodic word. The word $q$ is called a \textit{quasiperiod}
of $w$. It is called \textit{the quasiperiod} of $w$ if $w$ has no
smaller quasiperiod. 
For instance, the word $w=ababaabababaabababa$ is $aba$-quasiperiodic and
$ababa$-quasiperiodic. The word $aba$ is 
the quasiperiod of $w$.

A word $w$ is said \textit{quasiperiodic} if it is $q$-quasiperiodic
for some word $q$. Otherwise $w$ is called \textit{superprimitive}. 
The quasiperiod of any quasiperiodic word $w$ is superprimitive.
The definition of quasiperiodicity extends naturally to infinite words.

Let us recall that a morphism $f$ is an application on words such that
for all words $u$ and $v$, $f(uv) = f(u)f(v)$. Such a morphism is
defined by images of letters. A well-known morphism is the Fibonacci
morphism $\varphi$ defined by $\varphi(a) = ab$, $\varphi(b)=a$. In~\cite{LeveRichomme2004BEATCS}, we proved that the infinite Fibonacci word, 
the fixed point of $\varphi$, has infinitely many quasiperiods that
are superprimitive. The first ones are $aba$, $abaab$, $abaababaa$.

Notice that from now on, we will only consider non-erasing morphisms (images of non-empty words differ from the empty word).
As mentioned in the introduction, \textit{strongly quasiperiodic on infinite words morphisms} were introduced as a tool to study quasiperiodicity of some infinite words. They are the morphisms that map any infinite word to a quasiperiodic infinite words. Also were introduced \textit{weakly quasiperiodic on finite words morphisms} that map at least one non-quasiperiodic infinite word to a quasiperiodic one. Examples are provided in the next section. It is interesting to observe that a morphism that is not weakly quasiperiodic on infinite words could be called a quasiperiodic-free morphism as it maps any non-quasiperiodic infinite word to another non-quasiperiodic word. This allows to relate the current study to the stream of works around power-free morphisms. In this context, it is natural to consider previous notions on finite words. Thus in this paper, we will also consider \textit{strongly quasiperiodic on finite words morphisms} that map any finite word to a quasiperiodic word, and \textit{weakly quasiperiodic on finite words morphisms} that map at least one finite non-quasiperiodic word to a quasiperiodic word.

\section{\label{sec:relations}Relations}

In this section, we show that the basic relations between the different families of morphisms are the ones described in Fig.~\ref{fig:basic_relations}.
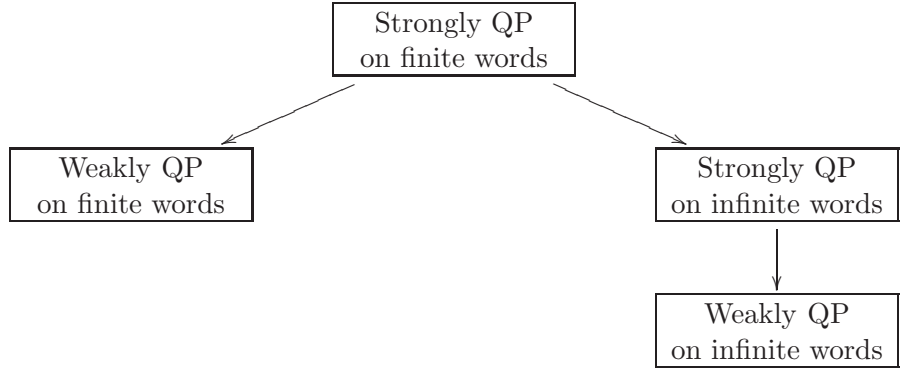
\begin{figure}
 $$\xymatrix{
 & \fbox{\begin{minipage}{3cm}\center Strongly QP\\on finite words\end{minipage}} \ar[dl]\ar[dr]& \\
\fbox{\begin{minipage}{3cm}\center Weakly QP\\on finite words\end{minipage}} & & \fbox{\begin{minipage}{3cm}\center Strongly QP\\on infinite words\end{minipage}}\ar[d]\\
&& \fbox{\begin{minipage}{3cm}\center Weakly QP\\on infinite words\end{minipage}}\\
}$$
\caption{Basic relations\label{fig:basic_relations}}
\end{figure}

Let us first observe that it follows the definitions that any strongly
quasiperiodic on finite (resp. infinite) words morphism is also a
weakly quasiperiodic on finite (resp. infinite) words
morphism. Next result proves the last relation of Fig.~\mbox{\ref{fig:basic_relations}}.
Its proof uses Lemma~\mbox{\ref{L:2}}.

\begin{proposition}
\label{P:1}
Any strongly quasiperiodic on finite words morphism is strong\-ly quasiperiodic on infinite words.
\end{proposition}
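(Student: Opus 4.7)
The plan is to lift quasiperiodicity from finite prefixes to the infinite word. Let $f$ be a strongly quasiperiodic on finite words morphism and let $\mathbf{w}=a_1 a_2 a_3 \cdots$ be an arbitrary infinite word; I must show that $f(\mathbf{w})$ is quasiperiodic. Setting $w_n = a_1 \cdots a_n$, the words $f(w_n)$ form an increasing sequence of prefixes of $f(\mathbf{w})$ whose lengths tend to infinity. By hypothesis each $f(w_n)$ is quasiperiodic; let $q_n$ denote \emph{the} (minimal) quasiperiod of $f(w_n)$, which is then a proper prefix of $f(w_n)$ and, in particular, a prefix of $f(\mathbf{w})$.

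Next, I would try to exhibit a single word $q$ that is a quasiperiod of $f(w_n)$ for infinitely many values of $n$. Anticipating that Lemma~\ref{L:2} provides the criterion that an infinite word is $q$-quasiperiodic as soon as infinitely many of its prefixes are $q$-quasiperiodic (as finite words), producing such a $q$ finishes the proof immediately. The natural route to extract such a $q$ is the pigeonhole principle: since each $q_n$ is a prefix of the fixed infinite word $f(\mathbf{w})$, it is entirely determined by its length, so as soon as the sequence $(|q_n|)_{n \geq 1}$ is bounded, only finitely many distinct values of $q_n$ can occur and at least one must recur for infinitely many $n$.

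The main obstacle is therefore controlling the lengths $|q_n|$. I would argue boundedness by contradiction: suppose $|q_n| \to \infty$ along a subsequence. Since $q_n$ is simultaneously a prefix and a suffix of $f(w_n)$ (the latter because a $q$-quasiperiodic finite word must end with a full occurrence of $q$), such a growing common prefix-suffix imposes rigid combinatorial constraints on $f(\mathbf{w})$. The key idea is then to use the morphism structure to build a finite word $v$ whose image $f(v)$ cannot be quasiperiodic — for instance by concatenating $w_n$ with a short word forcing the potential $q_n$-cover to break — thereby contradicting the hypothesis on $f$. This combinatorial step, which is presumably where Lemma~\ref{L:2} is invoked, is the technical heart of the proof; once the boundedness of $|q_n|$ has been established, the pigeonhole-and-apply-the-lemma finish described above concludes the argument.
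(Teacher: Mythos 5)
Your overall skeleton (quasiperiods of the images of the prefixes, boundedness, pigeonhole, then passage to the limit) is a legitimate strategy, and the final step you rely on is indeed true: if infinitely many prefixes of an infinite word are $q$-quasiperiodic, every position is covered by an occurrence of $q$ lying inside a long enough prefix, so the infinite word is $q$-quasiperiodic. Note, however, that this easy criterion is \emph{not} what Lemma~\ref{L:2} says: that lemma states that if $|f(u)^k|\geq |f(v)|$ and both $f(u)$ and $f(u^kvu^k)$ are quasiperiodic, then they have the same quasiperiod. Misattributing the limit criterion to it is harmless in itself, but it hides the fact that the real technical content of the proof lies elsewhere.

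The genuine gap is the boundedness of the lengths $|q_n|$ of the minimal quasiperiods of $f(w_n)$. You reduce everything to this claim and then only sketch a contradiction argument (``build a finite word $v$ whose image cannot be quasiperiodic, for instance by concatenating $w_n$ with a short word forcing the potential $q_n$-cover to break''), without any construction and without explaining why a cover by a long quasiperiod would break; since the hypothesis guarantees that \emph{every} finite word has quasiperiodic image, producing such a $v$ is exactly the hard part, and nothing in your sketch does it. This is precisely where the paper works: taking $\alpha$ the first letter of $\bw$ and applying Lemma~\ref{L:2} with $u=\alpha$, the quasiperiod of $f(\alpha^k v\alpha^k)$ (quasiperiodic by hypothesis, for $k$ with $|f(\alpha)^k|\geq|f(v)|$) must equal the quasiperiod $q_\alpha$ of $f(\alpha)$; the proof of the lemma is a nontrivial overlap/primitivity argument (via \cite[Lem.~1.3.4]{Lothaire1983book}). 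This gives at once a \emph{single} quasiperiod $q_\alpha$, depending only on the first letter, covering arbitrarily long prefixes of $f(\bw)$ — so the paper needs neither your boundedness-by-contradiction step nor the pigeonhole (the pigeonhole argument you envisage is essentially what the paper uses later, in Lemma~\ref{L:borne_sqi}, after the bound has been established by Lemma~\ref{L:borne_sqf}, again via Lemma~\ref{L:2}). As written, your proposal assumes the key quantitative fact rather than proving it, so the proof is incomplete.
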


\begin{lemma}
\label{L:2}
Let $f$ be a morphism. 
Assume the existence of two words $u$ and $v$ and of an integer $k$ 
such that $|f(u)^k| \geq |f(v)|$. 
If $f(u)$ and $f(u^k v u^k)$ are quasiperiodic, then their quasiperiods are equal.
\end{lemma}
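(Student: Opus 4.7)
Set $g := f(u)$, $n := |g|$, $m := |f(v)|$, and $w := f(u^k v u^k) = g^k f(v) g^k$, with $kn \ge m$. Let $q_1$ and $q_2$ denote the respective quasiperiods of $g$ and $w$. Since $q_1$ is a prefix and suffix of $g$ and $g$ is a prefix and suffix of $w$, both $q_1$ and $q_2$ are prefixes (and suffixes) of $w$, so one is a prefix of the other and it suffices to prove $|q_1|=|q_2|$. My plan is to establish both $|q_1|\le|q_2|$ and $|q_2|\le|q_1|$ by exploiting the hypothesis $kn\ge m$ together with two general properties of quasiperiods: the transitivity of the ``quasicovers'' relation (if $a$ quasicovers $b$ and $b$ quasicovers $c$, then $a$ quasicovers $c$), and the chain structure of quasiperiods (two quasiperiods of the same word are ordered by prefix, and the shorter one is itself a quasiperiod of the longer; see, e.g., \cite{ApostolicoEhrenfeucht1993TCS}).

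The main preliminary is to prove $|q_2|\le n$; this is where the hypothesis $kn\ge m$ is used most delicately. The intuition is that the middle block $f(v)$ is no longer than either of the two flanking $g^k$ blocks, so the quasiperiod of $w$ cannot extend past a single copy of $g$. I would argue by contradiction: assuming $|q_2|>n$, both the prefix- and the suffix-occurrence of $q_2$ in $w$ strictly contain a copy of $g$; examining the $q_2$-occurrences forced to cover positions just inside the middle block and using the period-$n$ structure of $g^k$, one produces a strictly shorter prefix of $w$ already quasicovering $w$, contradicting the minimality of $q_2$. The case $|q_2|=n$ (which would force $q_2=g$) is ruled out by transitivity: $q_1$ quasicovers $g=q_2$ and $q_2$ quasicovers $w$, so $q_1$ quasicovers $w$, and since $q_1$ is a prefix and suffix of $w$ with $|q_1|<n=|q_2|$, this contradicts the minimality of $q_2$. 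Hence $|q_2|<n$, so $q_2$ is a proper prefix and suffix of $g$.

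Granting this, I show $q_2$ is a quasiperiod of $g$: for every position $i$ of the first copy of $g$ in $w$, the $q_2$-occurrence in $w$ covering $i$ either already lies inside that copy of $g$ or, using $|q_2|<n$, straddles two consecutive copies of $g$ inside the first $g^k$ block; in the latter case a shift by $\pm n$, legitimate by the period-$n$ structure of $g^k$, yields a $q_2$-occurrence entirely inside a single copy of $g$ still covering the translate of $i$. Thus $q_2$ is a quasiperiod of $g$, and minimality of $q_1$ gives $|q_1|\le|q_2|$. For the reverse inequality, the chain structure of quasiperiods applied to $g$ shows that $q_1$ is a quasiperiod of $q_2$; transitivity then yields that $q_1$ quasicovers $w$, and $q_1$ is a prefix and suffix of $w$, so $q_1$ is a quasiperiod of $w$. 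Minimality of $q_2$ gives $|q_2|\le|q_1|$, hence $|q_1|=|q_2|$ and $q_1=q_2$. The main obstacle is the preliminary bound $|q_2|\le n$, whose proof is the one place where the hypothesis $kn\ge m$ enters essentially; the rest is a routine use of the period-$n$ structure of $g^k$ and of the chain and transitivity properties of quasiperiods.
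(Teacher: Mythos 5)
Your overall strategy --- show that $q_2$ covers $g=f(u)$, deduce $|q_1|\le|q_2|$, then use transitivity and the chain property of covers to get $|q_2|\le|q_1|$ --- is viable and genuinely different from the paper's, but it hinges entirely on the preliminary bound $|q_2|<n$, and that is precisely the step you do not prove. Your one-sentence sketch (``examining the $q_2$-occurrences forced to cover positions just inside the middle block and using the period-$n$ structure of $g^k$, one produces a strictly shorter prefix of $w$ already quasicovering $w$'') does not describe a working mechanism: the hypothesis only gives $|f(v)|\le kn$, not $|f(v)|\le n$, so a priori $q_2$ can be much longer than $g$; and in the regime $kn<|q_2|<2kn$ the prefix occurrence of $q_2$ reaches into the middle block $f(v)$, where there is no period-$n$ structure to exploit, and it is never explained how a shorter cover of $w$ is actually produced. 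The bound is in fact true and provable, but by a different mechanism: if $|q_2|\ge n$ then, according to whether $|q_2|\le kn$ or $|q_2|>kn$, the word $q_2$ (being a prefix and a suffix of $w$) has $g^j$ for some $j\ge 1$, respectively $g^k$, as a border, and a border/Fine--Wilf computation gives $q_2$ a period of length at most $|q_2|/2$, so $q_2$ is itself quasiperiodic, contradicting the superprimitivity of a quasiperiod --- a contradiction with superprimitivity of $q_2$, not with its minimality as a cover of $w$. As written, your proof has a genuine gap at its central step. A smaller slip: when $k=1$ your argument that $q_2$ covers $g$ cannot appeal to a $q_2$-occurrence ``straddling two consecutive copies of $g$ inside the first $g^k$ block''; you need instead the suffix occurrence of $q_2$ inside the first copy of $g$.

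For comparison, the paper never needs $|q_2|<n$. It first shows $|q_1|\le|q_2|$ directly (otherwise $q_2$ would be a shorter quasiperiod of $f(u)$, since $f(u)$ is a prefix of the $q_2$-covered word and has $q_2$ as a suffix); it then rules out $|q_2|\ge 2|f(u^k)|$ because the prefix and suffix occurrences of $q_2$ in $w$ would then overlap by at least half their length, forcing $q_2$ to be $xyx$-quasiperiodic --- this is the only place where $|f(u)^k|\ge|f(v)|$ is used; finally, for $|q_2|<2|f(u^k)|$, it uses the $q_1$-quasiperiodicity of $f(u^k)$ together with $|q_1|\le|q_2|$ to conclude that $q_2$ is $q_1$-quasiperiodic or equal to $q_1$, hence equal by superprimitivity. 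That last step replaces both your covering argument and your appeal to the chain/transitivity properties.
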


\begin{proof}
Let $q_u$ be the quasiperiod of $f(u)$ and let $q$ be the quasiperiod of the word $f(u^kvu^k)$.

If $|q|< |q_u|$, then $q$ is a prefix and a suffix of $q_u$ and as $f(u)$ is a factor of a $q$-quasiperiodic word, it is also $q$-quasiperiodic (we have $f(u) \neq q$ for length reason). This contradicts the fact that, by definition, $q_u$ is the smallest quasiperiod of $f(u)$.

So $|q_u| \leq |q|$.
Assume $|q| \geq 2|f(u^k)|$. So by choice of $k$, $|q| \geq |f(u^k)|+|f(v)|$.
This implies that the prefix occurrence of $q$ in $f(u^kvu^k)$ overlaps the suffix occurrence. More precisely $q = q_1 q_2 = q_2 q_3$ with $|q_1q_2| \geq 2|f(u^k)|$ and $|q_1| = |q_3| \leq |f(u^k)|$: we have $|q_2| \geq |q_1|$. By a classical result (see \cite[Lem. 1.3.4]{Lothaire1983book}), there exists words $x$ and $y$ with $xy \neq \emptyword$ and an integer $\ell$ such that $q_1 = xy$, $q_2 = (xy)^\ell x$ and $q_3 = yx$. For length reason, $\ell \neq 0$ so that $q$ is $xyx$-quasiperiodic. This contradicts the fact that $q$ is superprimitive.

Thus $|q| < 2|f(u^k)|$. As $q$ is both prefix and
suffix-comparable with $f(u^k)$ which is
$q_u$-quasiperiodic, as $|q_u| \leq |q|$,
and as $q$ is superprimitive, $q = q_u$. 
\end{proof}

\begin{proof2}{Proposition~\ref{P:1}}
Assume $f$ is strongly quasiperiodic on finite words. Let $\alpha$ be
a letter and let $q_\alpha$ be the quasiperiod of $f(\alpha)$. By Lemma~\ref{L:2}, for any word $u$, there exists an integer $k$ such that $f(\alpha^ku\alpha^k)$ is $q_\alpha$-quasiperiodic.
This implies that, for any word $u$, $f(\alpha u)$ is a prefix of a $q_\alpha$-quasiperiodic word. Equivalently, for any infinite word
$\bw$, $f(\alpha \bw)$ is a $q_\alpha$-quasiperiodic word.\qed
\end{proof2}

\medskip
Conversely to Proposition~\mbox{\ref{P:1}}, it is easy to find an example showing the existence of a morphism that is strongly quasiperiodic on infinite words but not on finite words. Just look at the morphism that maps $a$ to $aa$ and $b$ to $a$, or at next example of a strongly quasiperiodic morphism on
infinite words that is not weakly quasiperiodic on finite words.

\begin{example}\rm
Let $f$ be the morphism defined on $\{a, b\}^*$ by
\begin{quote}
$f(a) = abaababaababababaab$\\
$f(b) = abaabaabababababaab$.
\end{quote}
It is straigthforward that $f(\bw)$ is $aba$-quasiperiodic for any infinite word $\bw$.
Let us prove that $f$ is not weakly quasiperiodic on finite words.
Assume by contradiction the existence of a non-quasiperiodic word $u$
such that $f(u)$ is quasiperiodic. Observe $u \neq a$, $u \neq b$ and
the quasiperiod of $u$ ends with $ab$. An exhaustive verification
allows to see that no proper prefix of $f(a)$ nor $f(b)$ could be a
quasiperiod of $f(u)$. Hence $f(a)$ or $f(b)$ is a prefix of the
quasiperiod $q$ of $f(u)$. Observing this implies $|q| \geq
|f(a)| = |f(b)|$, we deduce that $f(a)$ or $f(b)$ is a suffix of $q$.
As $f(a)$ and $f(b)$ are not internal factors of $f(aa)$, $f(ab)$, $f(ba)$, $f(bb)$, $q = f(q')$ for some word $q'$. Moreover $u$ is $q'$-quasiperiodic, a contradiction.
\end{example}

\medskip
Next examples show that the other converses of the relations presented in Fig.~\ref{fig:basic_relations} are false.

\begin{example}
The morphism that maps $a$ to $aa$ and $b$ to $bb$ is weakly
quasiperiodic on finite words (as $f(a)$ is quasiperiodic), but we let
readers verify that it is not weakly quasiperiodic on infinite
words. Thus $f$ is not strongly quasiperiodic on infinite words and,
as a consequence of Proposition~\ref{P:1}, it is not strongly quasiperiodic on finite words.
\end{example}

\begin{example}
\label{ex:3.5}
The morphism $f$ defined by $f(a) = ba$ and $f(b) = bba$ is weakly quasiperiodic on infinite words since for all word $w \in a\{a,b\}^\omega$, $f(w)$ is $bab$-quasiperiodic. 
But $f(ba^\omega)=bb(ab)^\omega$ is not quasiperiodic, and so $f$ is not strongly quasiperiodic on infinite words. By Proposition~\ref{P:1}, {$f$ is not strongly quasiperiodic on finite words.}
\end{example}

\section{\label{sec:decision_wqf}Deciding Strong Quasiperiodicity on Finite Words}

Next lemma which is a direct consequence of Lemma~\ref{L:2} is the key observation to decide whether a morphism is strongly quasiperiodic on finite words.

\begin{lemma}
\label{L:borne_sqf}
If $f$ is a strongly quasiperiodic on finite words morphism, then for any word $u$ and any letter $\alpha$, the quasiperiod of $f(u)$ is a factor of $f(\alpha^3)$ of length less than $2|f(\alpha)|$.
\end{lemma}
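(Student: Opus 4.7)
I denote by $q$ the quasiperiod of $f(u)$ and by $q_\alpha$ the quasiperiod of $f(\alpha)$; both exist since $f$ is strongly quasiperiodic on finite words. The plan is to show $q = q_\alpha$: once this is established, the lemma follows immediately. Indeed, $q_\alpha$ is by definition a proper prefix of $f(\alpha)$, hence a prefix -- and in particular a factor -- of $f(\alpha^3) = f(\alpha)f(\alpha)f(\alpha)$, with $|q| = |q_\alpha| < |f(\alpha)| < 2|f(\alpha)|$.

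To prove $q = q_\alpha$, I invoke Lemma~\ref{L:2} in two complementary ways. First, with the variables ``$u$'' and ``$v$'' of that lemma set to $\alpha$ and $u$ respectively, and with $k$ chosen large enough that $|f(\alpha)^k| \geq |f(u)|$: strong quasiperiodicity on finite words guarantees that both $f(\alpha)$ and $f(\alpha^k u \alpha^k)$ are quasiperiodic, so Lemma~\ref{L:2} yields that the quasiperiod of $f(\alpha^k u \alpha^k)$ is $q_\alpha$. Second, with the variables set to $u$ and $\alpha$, and with $k'$ chosen large enough that $|f(u)^{k'}| \geq |f(\alpha)|$, Lemma~\ref{L:2} yields that the quasiperiod of $f(u^{k'} \alpha u^{k'})$ is $q$.

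To connect the two, I apply Lemma~\ref{L:2} a third time, with variables $\alpha$ and $u^{k'} \alpha u^{k'}$ and $K$ large enough that $|f(\alpha)^K| \geq |f(u^{k'} \alpha u^{k'})|$. This makes $f(\alpha^K u^{k'} \alpha u^{k'} \alpha^K)$ globally $q_\alpha$-quasiperiodic, while its internal factor $f(u^{k'} \alpha u^{k'})$ is $q$-quasiperiodic. The main obstacle, and the heart of the proof, is to derive from this nested configuration that $q = q_\alpha$ as strings. The argument will combine the minimality characterization of the quasiperiod (it is by definition the shortest quasiperiod) with the observation -- extracted from the proof of Lemma~\ref{L:2} -- that each quasiperiod must occur as a common prefix and suffix of appropriate factors of the form $f(w^j)$; two quasiperiods of the same minimal length that both occur as the prefix of a common word must be equal.
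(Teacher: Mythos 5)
Your argument cannot be completed, because its central claim --- that the quasiperiod $q$ of $f(u)$ equals the quasiperiod $q_\alpha$ of $f(\alpha)$ --- is simply false. Take the morphism of Example~\ref{ex:aut2}, $f(a)=abaaba$, $f(b)=baabaaba$, which the paper shows is strongly quasiperiodic on finite words. The quasiperiod of $f(a)$ is $aba$ while the quasiperiod of $f(b)$ is $baaba$; with $u=b$ and $\alpha=a$ your claim would force $baaba=aba$. Your three invocations of Lemma~\ref{L:2} are each legitimate, and in this example they tell you (taking $k'=1$, since $|f(b)|\geq|f(a)|$) that $f(bab)$ has quasiperiod $baaba$ while $f(a^K b a b\, a^K)$ has quasiperiod $aba$; there is no contradiction in a word and one of its factors having different quasiperiods, so the nested configuration yields nothing. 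The closing sentence, that two quasiperiods ``of the same minimal length'' occurring as prefixes of a common word must coincide, is never supported by an argument that the two lengths agree --- and they need not ($3\neq 5$ here). So the gap is not a missing technical step: the intended intermediate statement is wrong, and the strategy must be abandoned.

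What the lemma actually needs is weaker than $q=q_\alpha$ and follows from a single application of Lemma~\ref{L:2} with the roles arranged the other way around (this is the paper's route): choose $i$ with $|f(\alpha^i)|\geq 2|q|$, then $k$ with $|f(u^k)|\geq|f(\alpha^i)|$, and apply Lemma~\ref{L:2} to $u$ and $v=\alpha^i$ to conclude that $f(u^k\alpha^i u^k)$ still has quasiperiod $q$. In a $q$-cover of that word, the occurrence of $q$ covering the middle position of the block $f(\alpha^i)$ must, because $|f(\alpha^i)|\geq 2|q|$, lie entirely inside $f(\alpha)^i$; hence $q$ is a factor of $f(\alpha)^i$. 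Finally, a factor of $f(\alpha)^i$ of length at least $2|f(\alpha)|$ has period $|f(\alpha)|$ and is therefore itself quasiperiodic, contradicting the superprimitivity of $q$; so $|q|<2|f(\alpha)|$ and $q$ is already a factor of $f(\alpha)^3$.
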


\begin{proof}
Assume $f$ is strongly quasiperiodic on finite words.
Let $u$ be a word and let $q_u$ be the quasiperiod of $f(u)$.
Let $i$ be an integer such that $|f(\alpha^i)| \geq 2|q_u|$ ($|f(\alpha)| \neq 0$ as $f(\alpha)$ is quasiperiodic).
Let $k$ be an integer such that $|f(u^k)| \geq |f(\alpha^i)|$.
By Lemma~\ref{L:2}, the quasiperiod of $f(u^k\alpha^i u^k)$ is $q_u$. As $|f(\alpha)^i| \geq 2|q_u|$, $q_u$ must be a factor of $f(\alpha)^i$.
As $q_u$ is superprimitive, $|q_u| < 2|f(\alpha)|$.
Consequently $q_u$ is a factor of $f(\alpha)^3$. 
\end{proof}

Observe now that, given two words $u$ and $q$, it follows the
definition of quasiperiodicity that the $q$-quasiperiodicity of $f(u)$
implies that, for each non-empty proper prefix $\pi$ of $f(u)$, $\pi =
xps$ with $xp = \varepsilon$, $xp = q$ or $xp$ is the longest
$q$-quasiperiodic prefix of $\pi$ if $|\pi|>|q|$, and
$ps$ a prefix of $q$. 
{Based on this remark, we introduce an automaton that will allow to recognize words $u$ such that $f(u)$ is $q$-quasiperiodic (or $q$ or the empty word $\varepsilon$), for a given word $q$ and a given morphism $f$.
}
{Note that a quasiperiod may have several borders, that is, proper suffixes that are prefixes.}
{For instance, the word $q = abacaba$ has $\varepsilon$, $a$ and $aba$ as borders.}
{Thus while processing the automaton, one cannot determine with precision which will be the word $p$ occurring in previous observation until the reading of next letters.}
{Therefore the constructed automaton will just remind (instead of initial $p$) the longest suffix $p$ of $\pi$ such that $ps$ is a prefix of $q$.}

\begin{definition}
\label{D:automate1}
Let $f$ be a morphism over $A^*$ and $q$ be a non-empty word. We denote ${\cal A}_q(f)$, or simply ${\cal A}_q$, the automaton $(A, Q, i, F, \Delta)$ where:
\begin{itemize}
\item the states, the elements of $Q$, are the couples $(p, s)$ such that $ps$ is a proper prefix of $q$;
\item the initial state $i$ is the couple $(\varepsilon, \varepsilon)$;
\item the final states, the elements of $F$, are the couples on the form $(p, \varepsilon)$, with $p$ a prefix of $q$;
\item the transitions, the elements of $\Delta$, are triples $((p_1, s1), a, (p_2, s_2))$ where $(p_1, s1) \in Q$, $(p_2, s_2) \in Q$ and one of the two following situations holds:
 \begin{enumerate}
 \item 
If $q$ does not occur in $p_1s_1f(a)$ and $|q| > |s_1f(a)|$, then
\begin{itemize}
\item
$s_1f(a)=s_2$, 
\item $p_2$ is the longest suffix of $p_1$ such that $p_2s_1f(a)$ is a
  proper prefix of $q$.
\end{itemize}

 \item If $q$ occurs in $p_1s_1f(a)$
\begin{itemize}
\item
 there exist a suffix $x$ of $p_1$ and a word $y$ such that $xs_1f(a)
 = ys_2$ with $y = q$ or $y$ is $q$-quasiperiodic,
\item
$p_2$ is the longest suffix of $y$ such that $p_2s_2$ is a proper prefix of
$q$. 
\end{itemize}
 \end{enumerate}
\end{itemize}
\end{definition}

{The automaton} defined in previous definition {is determinist}. It should be
emphasized that given a state $(p, s)$ and a letter $a$, there may not
exist a state $(p', s')$ such that a transition $((p,s), a, (p',s'))$
exists. {We let readers verify the next observation and its corollary.} 

\begin{fact}
\label{F:fonctionnement_automate}
Any state $(p,s)$ in ${\cal A}_q$ is reached by reading a word $u$ if
and only if there exist words $\pi$, $p$ and $s$, such that $f(u) =
\pi p s$ with $\pi p = \varepsilon$, $\pi p = q$ or $\pi p$ is a $q$-quasiperiodic word, and, $ps$ is the longest prefix of $q$ that is a suffix of $f(u)$.
\end{fact}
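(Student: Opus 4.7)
The plan is a straightforward induction on $|u|$, maintaining as the induction hypothesis the full biconditional: reading $u$ leads to state $(p,s)$ precisely when $f(u) = \pi p s$ with $\pi p$ empty, equal to $q$, or $q$-quasiperiodic, and $ps$ the longest prefix of $q$ that is a suffix of $f(u)$. Since $\mathcal{A}_q$ is deterministic, the backward implication will reduce to a uniqueness check on the pair $(p,s)$ given the decomposition. For the base case $u=\varepsilon$ only the initial state is reached, and the trivial decomposition $\varepsilon \cdot \varepsilon \cdot \varepsilon$ fits.

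For the inductive step, assume the claim for $u$, let $(p_1,s_1)$ be the state reached by $u$, and consider reading a letter $a$ leading to $(p_2,s_2)$. In Case~1 of Definition~\ref{D:automate1}, I would factor $p_1 = w p_2$ (possible since $p_2$ is a suffix of $p_1$) and set $\pi_2 = \pi_1 w$; then $\pi_2 p_2 = \pi_1 p_1$ inherits the quasiperiodicity property, $f(ua) = \pi_2 p_2 s_2$ with $s_2 = s_1 f(a)$, and maximality of $p_2 s_2$ follows from the two hypotheses of Case~1 ($q$ does not occur in $p_1 s_1 f(a)$ and $|q| > |s_1 f(a)|$) combined with the inductive maximality of $p_1 s_1$ in $f(u)$.

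In Case~2, I would use the factorizations $p_1 = p_1' x$ and $y = y' p_2$ from the definition and set $\pi_2 = \pi_1 p_1' y'$, so that $f(ua) = \pi_2 p_2 s_2$. The main obstacle lies here: one must show $\pi_2 p_2 = \pi_1 p_1' y$ is again $\varepsilon$, $q$, or $q$-quasiperiodic, i.e., that the seam where $\pi_1 p_1'$ meets $y$ is tiled by overlapping occurrences of $q$. I would split on whether $\pi_1 p_1 = \varepsilon$ (giving $\pi_2 p_2 = y$, already $q$ or $q$-quasiperiodic by construction) or $\pi_1 p_1$ contains an occurrence of $q$; in the latter case, the rightmost $q$-occurrence in $\pi_1 p_1$ ends inside or just past $p_1'$, and the $q$-occurrence beginning $y$ starts at position $|\pi_1 p_1'|$, so the two are forced to overlap or abut by the definitions of $x$ and $y$ and the inductive fact that $\pi_1 p_1$ is exactly the longest $q$-quasiperiodic (or trivial or $=q$) prefix of $f(u)$. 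Maximality of $p_2 s_2$ in $f(ua)$ is again immediate from the greedy definition of $p_2$.

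For the converse, determinism of $\mathcal{A}_q$ together with the forward direction reduces the problem to uniqueness of $(p,s)$ given a valid decomposition of $f(u)$: the maximality condition pins down the product $ps$, and the requirement that $p$ be a suffix of the quasiperiodic prefix $\pi p$ with $ps$ a proper prefix of $q$, as realized greedily in both clauses of Definition~\ref{D:automate1}, pins down $p$ and $s$ individually. Combined with the forward direction and the observation that reading $u$ either leads to this unique state or fails entirely, this closes the equivalence.
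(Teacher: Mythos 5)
The paper offers no proof of this Fact (it is explicitly left to the reader), so there is nothing to compare against; your induction on $|u|$ with a case split on the two transition types is the natural and essentially only route, and your forward direction is sound. In particular you correctly isolate the crux of Case~2: the occurrence of $q$ ending $\pi_1p_1$ and the occurrence of $q$ beginning $y$ must overlap or abut. This does hold, because $p_1s_1$ is a proper prefix of $q$, so $|x|\leq|p_1|<|q|\leq|y|$; hence $x$ is a prefix of $y$, the old quasiperiodic prefix $\pi_1p_1=\pi_1p_1'x$ is a prefix of the new one $\pi_1p_1'y$, and the occurrence of $q$ starting $y$ begins no later than the position where the occurrence of $q$ ending $\pi_1p_1$ ends, so the coverage has no gap.

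The backward direction, however, has a genuine gap. Determinism plus the forward direction only shows that \emph{if} the run of ${\cal A}_q$ on $u$ is defined, then it ends in the unique admissible state; it does not exclude the possibility that the run gets stuck on a proper prefix of $u$ (the transition function is partial) even though $f(u)$ admits a valid decomposition. To close this you must show that a valid decomposition of $f(u)$ induces one for $f(u')$ for every prefix $u'$ of $u$ --- this is exactly where the remark preceding Definition~\ref{D:automate1} about prefixes of $q$-quasiperiodic words is needed --- and that a transition out of the corresponding state then always exists. Two further points are glossed over. First, uniqueness of the split of $ps$ into $p$ and $s$ does not follow from the conditions as stated (they constrain only the product $ps$ and the word $\pi p$), so ``pins down $p$ and $s$ individually'' needs an actual argument, e.g.\ that $p$ must be the longest suffix of $\pi p$ for which $ps$ remains a proper prefix of $q$. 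Second, the statement degenerates when $q$ itself is a suffix of $f(u)$ (e.g.\ $f(u)=q$): then the longest prefix of $q$ that is a suffix of $f(u)$ is $q$ itself, which cannot equal $ps$ for any state of $Q$, yet Example~\ref{ex:aut1} shows a state is reached in that situation ($f(b)=aba$ lands in $(a,\varepsilon)$). A careful proof must either read ``longest proper prefix'' or treat this case separately; yours silently assumes it away.
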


\begin{lemma}
\label{L:automate}
A word $u$ is recognized by ${\cal A}_q$ if and only if $f(u) = \varepsilon$ or $f(u) = q$ or $f(u)$ is $q$-quasiperiodic.
\end{lemma}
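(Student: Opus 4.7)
The plan is to derive the lemma as a direct consequence of Fact~\ref{F:fonctionnement_automate}. By definition, $u$ is recognized by ${\cal A}_q$ exactly when reading $u$ from the initial state $(\varepsilon,\varepsilon)$ leads to a final state, i.e.\ a state of the form $(p,\varepsilon)$. Both directions of the lemma thus reduce to translating this acceptance condition through the characterization of reachable states provided by the Fact.

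For the forward implication I would take any accepting run of $u$, ending at some $(p,\varepsilon)\in F$. Applying Fact~\ref{F:fonctionnement_automate} with second component equal to $\varepsilon$ yields a word $\pi$ such that $f(u) = \pi p$ with $\pi p \in \{\varepsilon, q\}\cup\{w : w\text{ is }q\text{-quasiperiodic}\}$. Since $f(u)=\pi p$, $f(u)$ itself is $\varepsilon$, equals $q$, or is $q$-quasiperiodic, as required.

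For the converse I would split on which of $f(u)=\varepsilon$, $f(u)=q$, or ``$f(u)$ is $q$-quasiperiodic'' holds. When $f(u)=\varepsilon$ the non-erasing hypothesis forces $u=\varepsilon$, so the computation stays at the initial state $(\varepsilon,\varepsilon)$, which is final. In the remaining two cases, I would let $(p^\star,s^\star)$ be the unique state reached after reading $u$ and apply Fact~\ref{F:fonctionnement_automate} to it: there exists $\pi$ with $f(u) = \pi p^\star s^\star$ and $\pi p^\star \in \{\varepsilon, q\}\cup\{q\text{-quasiperiodic words}\}$. The decomposition $\pi p^\star = f(u)$, $s^\star = \varepsilon$ is compatible with the Fact's requirements, and since the Fact pins down the reached state uniquely, the state reached after $u$ must coincide with this one; hence $s^\star = \varepsilon$ and $(p^\star,\varepsilon)\in F$.

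The delicate step is precisely the last one: one must argue that when $f(u)$ is already ``good'', it is the decomposition $\pi p^\star = f(u)$, $s^\star = \varepsilon$ that the deterministic automaton selects, rather than some strictly shorter good prefix of $f(u)$. This relies on the implicit maximality built into Definition~\ref{D:automate1} (case 2 always extends the good part to the longest available $y$) and reflected in the Fact through the choice of the longest prefix of $q$ that is a suffix of $f(u)$. I expect this to be the main conceptual point; at the level of calculations it amounts only to a careful re-reading of the definition of ${\cal A}_q$ and of Fact~\ref{F:fonctionnement_automate}.
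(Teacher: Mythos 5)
Your derivation of the lemma from Fact~\ref{F:fonctionnement_automate} is correct and is exactly the route the paper intends: the lemma is stated there as a corollary of the Fact, with the verification explicitly left to the reader. Your handling of the ``delicate step'' (using the Fact's ``if'' direction together with determinism to force the reached state to have empty second component when $f(u)$ is already $q$-quasiperiodic) is the right way to fill in that omitted verification.
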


Let us give some examples of automata following the previous
definition. Notice that we just construct the states that are accessible from $(\varepsilon, \varepsilon)$.

\begin{example}
\label{ex:aut1}
Let $f$ be the morphism defined by $f(a) = ab$, $f(b) = aba$. The automaton ${\cal A}_{aba}$ is the following one.

 $$\xymatrix{
& & (\varepsilon, ab)\ar[dr]^a\ar@/_/[dd]_b & \\
\ar[r]&\ar[d](\varepsilon, \varepsilon)\ar[ur]^a\ar[dr]^b && (a, b)\ar[dl]_b\ar@(ru,r)[]^a\\
&& (a, \varepsilon)\ar@/_/[uu]_a\ar[d]\ar@(rd,r)[]_b \\
&&&\\
}$$
\end{example}

\begin{example}
\label{ex:aut2}
Let $f$ be the morphism defined by $f(a) = abaaba$, $f(b) = baabaaba$. 
Here follow automata ${\cal A}_{aba}$ and ${\cal A}_{baaba}$.

\hfill\begin{minipage}{5cm}
 $$\xymatrix{
&&&\\ 
\ar[r]&\ar[u](\varepsilon, \varepsilon)\ar[r]^a & (a, \varepsilon)\ar[u]\ar@(ru,r)[]^{a,b}\\
}$$
\end{minipage}\hfill
\begin{minipage}{5cm}
 $$\xymatrix{
&&&\\ 
\ar[r]&\ar[u](\varepsilon, \varepsilon)\ar[r]^b & (ba, \varepsilon)\ar[u]\ar@(ru,r)[]^{a,b}\\
}$$
\end{minipage}\hfill
\end{example}

\begin{example}
\label{ex:aut3}
Let $f$ be the morphism defined by $f(a) = aabaab$, $f(b) = aabaaaba$ and $f(c) = aabaababaabaa$. 
Here follows automaton ${\cal A}_{aabaa}$.

 $$\xymatrix{
& & (aa, b)\ar@/_/[dd]_b\ar@(ru,r)[]^a  \\
\ar[r]&\ar[d](\varepsilon, \varepsilon)\ar[ur]^a\ar[dr]^b \\
&& (a, aba)\ar@/_/[uu]_a\ar[d]\ar@(ru,r)[]^b \\
&&&\\
}$$
\end{example}

Let ${\cal Q}(f)$ be the set of all words $q$ such that, for all letters $\alpha$ in $A$, $|q| \leq 2|f(\alpha)|$ and $q$ is a factor of $f(\alpha)^3$.
Following  Lemma~\ref{L:borne_sqf}, ${\cal Q}(f)$ is the set of all possible quasiperiods of a word on the form $f(u)$. Thus Lemma~\ref{L:automate} implies the next characterization of strongly quasiperiodic morphisms on finite words.

\begin{proposition}
\label{P:strong_qp_on_finite}
A morphism $f$ is strongly quasiperiodic on finite words if and only if, for each letter $\alpha$, the word $f(\alpha)$ is quasiperiodic, and 
\[
A^* = \bigcup_{q \in {\cal Q}(f)} {\cal L}({\cal A}_q)
\]
where ${\cal L}({\cal A}_q)$ is the language recognized by the automaton ${\cal A}_q$.
\end{proposition}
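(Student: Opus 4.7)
The plan is to prove both implications by directly combining Lemma~\ref{L:borne_sqf} with Lemma~\ref{L:automate}, unpacking definitions on each side.

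For the direct implication, assume $f$ is strongly quasiperiodic on finite words. Applied to single-letter words this gives the first condition (each $f(\alpha)$ is quasiperiodic), and in particular $\mathcal{Q}(f)$ is non-empty since the quasiperiod of $f(\alpha)$ belongs to it. Now fix $u \in A^*$. If $u=\varepsilon$ then $f(u)=\varepsilon$ and the initial state of every $\mathcal{A}_q$ is already final, so $u$ is recognized. Otherwise, letting $q_u$ denote the quasiperiod of $f(u)$, Lemma~\ref{L:borne_sqf} puts $q_u$ in $\mathcal{Q}(f)$, and because $f(u)$ is $q_u$-quasiperiodic, Lemma~\ref{L:automate} places $u$ in $\mathcal{L}(\mathcal{A}_{q_u})$. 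This yields $A^* \subseteq \bigcup_{q\in\mathcal{Q}(f)} \mathcal{L}(\mathcal{A}_q)$; the reverse inclusion is trivial.

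For the converse, assume both right-hand conditions. Given $u \in A^+$, the hypothesis supplies some $q\in\mathcal{Q}(f)$ with $u \in \mathcal{L}(\mathcal{A}_q)$, and Lemma~\ref{L:automate} then forces $f(u) \in \{\varepsilon,q\}$ or $f(u)$ to be $q$-quasiperiodic. Since $f$ is non-erasing, $f(u)\neq\varepsilon$. The genuinely $q$-quasiperiodic case is the desired conclusion. In the case $f(u)=q$, if $|u|=1$ then $f(u)=f(\alpha)$ is quasiperiodic by the first hypothesis; for $|u|\geq 2$, one extends $u$ to $u\beta^k$ for a sufficiently large $k$ so that $|f(u\beta^k)|$ outgrows the length bound $2|f(\alpha)|$ imposed by $\mathcal{Q}(f)$, ruling out the spurious $f(u\beta^k)=q'$ branch of Lemma~\ref{L:automate} and yielding a genuinely quasiperiodic image whose quasiperiod one then pulls back to $f(u)$.

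The main obstacle is precisely this last case: Lemma~\ref{L:automate} admits $u$ on the mere basis of $f(u)=q$ without intrinsically forcing $f(u)$ to be quasiperiodic. Overcoming it exploits the conjunctive form of the definition of $\mathcal{Q}(f)$ (length cap and factor condition over every letter) together with the extension trick above, ensuring that recognition information about longer words feeds back into a quasiperiodic certificate for the short one.
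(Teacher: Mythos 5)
Your overall strategy is the same as the paper's: the proposition is obtained by combining Lemma~\ref{L:borne_sqf} (which places the quasiperiod of any image $f(u)$ inside ${\cal Q}(f)$) with Lemma~\ref{L:automate}, and your forward direction is exactly that argument. The difficulty you isolate in the converse is real, and the paper passes over it silently: membership $u \in {\cal L}({\cal A}_q)$ only guarantees $f(u)=\varepsilon$, $f(u)=q$, or $f(u)$ $q$-quasiperiodic, and the middle alternative does not by itself make $f(u)$ quasiperiodic. But your proposed repair does not work as stated. After extending $u$ to $u\beta^k$ you correctly conclude that $f(u\beta^k)$ is genuinely $q'$-quasiperiodic for some $q' \in {\cal Q}(f)$; however the final step, ``pulling back'' that quasiperiod to the prefix $f(u)$, is not a valid inference: a prefix of a quasiperiodic word need not be quasiperiodic (for instance $abaa$ is superprimitive although it is a prefix of the $aba$-quasiperiodic word $abaaba$), and neither Lemma~\ref{L:2} nor Lemma~\ref{L:automate} transfers quasiperiodicity from $f(u\beta^k)$ to its prefix $f(u)$. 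So, as written, the case $f(u)=q$ with $|u|\geq 2$ remains open.

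The gap can be closed, but by exploiting the definition of ${\cal Q}(f)$ directly rather than by extension. If $f(u)=q\in{\cal Q}(f)$ and $|u|\ge 2$, then the constraint $|q| \le 2|f(\gamma)|$ for every letter $\gamma$ forces $|u|=2$, say $u=\alpha\beta$ with $|f(\alpha)|=|f(\beta)|=n$ and $|q|=2n$. If $\beta=\alpha$, then $q=f(\alpha)^2$ is quasiperiodic. Otherwise, since $q$ is a factor of $f(\alpha)^3$ of length $2n$ and begins with $f(\alpha)$, writing $f(\alpha)=xy$ where $|x|$ is the offset of such an occurrence yields $xy=yx$; hence either $f(\beta)=f(\alpha)$ and $q=f(\alpha)^2$, or $x$ and $y$ are powers of a common word $t$ and $q=t^m$ with $m\ge 2$. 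In every case $q$ is quasiperiodic, which is exactly what your argument needs at that point. With this (or an equivalent) treatment of the corner case, your proof is complete and otherwise coincides with the paper's.
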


As ${\cal Q}(f)$ is finite, and as it is decidable whether a finite word is quasiperiodic~\cite{ApostolicoEhrenfeucht1993TCS,BrodalPedersen2000,IliopoulosMouchard1999a} (see also \cite{GroultRichomme2010TCS} for optimality of the complexity of these algorithms), we can conclude.

\begin{corollary}
It is decidable whether a morphism is strongly quasiperiodic on finite words.
\end{corollary}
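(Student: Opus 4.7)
The plan is to reduce the decision problem to the characterization provided by Proposition~\ref{P:strong_qp_on_finite} and then verify that every ingredient of that characterization is effectively computable. So, given a morphism $f$ on $A^*$, the algorithm will first check that $f(\alpha)$ is quasiperiodic for every letter $\alpha \in A$, and then check the language equality $A^* = \bigcup_{q \in {\cal Q}(f)} {\cal L}({\cal A}_q)$.

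For the letter-by-letter quasiperiodicity test I would simply invoke the algorithms of \cite{ApostolicoEhrenfeucht1993TCS,BrodalPedersen2000,IliopoulosMouchard1999a} (as mentioned in the paragraph preceding the statement). For the second check, I would proceed in three explicit computable steps. First, effectively enumerate ${\cal Q}(f)$: since ${\cal Q}(f)$ consists of words of length at most $2\max_\alpha |f(\alpha)|$ that occur as factors of some $f(\alpha)^3$, it is a finite set that can be listed by scanning the concatenations $f(\alpha)f(\alpha)f(\alpha)$. Second, for each $q \in {\cal Q}(f)$, construct the finite automaton ${\cal A}_q$ as in Definition~\ref{D:automate1}: its state set, being included in the set of pairs $(p,s)$ with $ps$ a proper prefix of $q$, has at most $|q|^2$ elements, and each transition can be computed by reading $f(a)$ and applying the case distinction of Definition~\ref{D:automate1}, using only the (decidable) tests ``$q$ occurs in $w$'' and ``$w$ is $q$-quasiperiodic.''

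Third, once all the ${\cal A}_q$ are built, form the finite union automaton recognizing $\bigcup_{q \in {\cal Q}(f)} {\cal L}({\cal A}_q)$ (for instance via the standard product or disjoint union plus $\varepsilon$-transitions construction), and test its universality, i.e.\ whether it accepts all of $A^*$. Universality is decidable for finite automata by complementing the deterministic version and checking emptiness. By Proposition~\ref{P:strong_qp_on_finite}, this universality test, combined with the quasiperiodicity tests on the $f(\alpha)$, returns the correct answer.

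The only substantive point that deserves care is verifying that the transition function in Definition~\ref{D:automate1} is indeed effectively computable from $f$ and $q$; but this reduces to elementary string manipulations together with the already-cited decidability of quasiperiodicity of a given finite word, so no real obstacle remains. Every other step is finite enumeration or a standard regular-language decision procedure.
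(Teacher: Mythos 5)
Your proposal is correct and follows essentially the same route as the paper: it applies the characterization of Proposition~\ref{P:strong_qp_on_finite}, uses the finiteness of ${\cal Q}(f)$ together with the cited algorithms for testing quasiperiodicity of finite words, and decides the resulting regular-language universality condition. You merely spell out in more detail (automaton construction, union, universality test) what the paper's one-line argument leaves implicit.
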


To end this section, let us illustrate Proposition~\ref{P:strong_qp_on_finite}.
If $f$ is the morphism considered in Example~\ref{ex:aut2} ($f(a) = abaaba$, $f(b) = baabaaba$), as $aba$ and $baaba$ belong to ${\cal Q}(f)$, as ${\cal L}({\cal A}_{aba}) = \varepsilon \cup a \{a, b\}^*$ and 
${\cal L}({\cal A}_{baaba}) = \varepsilon \cup b \{a, b\}^*$, as $f(a)$ and $f(b)$ are quasiperiodic, we can conclude by Proposition~\ref{P:strong_qp_on_finite} that $f$ is strongly quasiperiodic on finite words.

Now consider the morphism defined by $f(a) = ab$, $f(b) = aba$. We have ${\cal Q}(f) = \{a, b, ab, ba, aba\}$.
By Example~\ref{ex:aut1}, ${\cal L}({\cal A}_{aba}) = \varepsilon \cup \{a, b\}^*b$. We let readers verify that ${\cal L}({\cal A}_{a}) = {\cal L}({\cal A}_{b}) = {\cal L}({\cal A}_{ba}) = \emptyset$ and ${\cal L}({\cal A}_{ab}) = a^*$. Thus $f$ is not strongly quasiperiodic. As the set ${\cal L}({\cal A}_{aba})$ contains non-quasiperiodic words, this morphism $f$ is weakly quasiperiodic.

\section{\label{sec:decision_wqi}Deciding Strong Quasiperiodicity on Infinite Words}

We now show how to adapt the ideas of previous section to the study of strongly quasiperiodic on infinite words morphisms. First we adapt Lemma~\ref{L:borne_sqf}.

\begin{lemma}
\label{L:borne_sqi}
If $f$ is a strongly quasiperiodic on infinite words morphism, then for any infinite word $\bw$ and any letter $\alpha$, the quasiperiod of $f(\bw)$ is a factor of $f(\alpha^3)$ of length less than $2|f(\alpha)|$ that is a factor of ${\cal Q}(f)$.
\end{lemma}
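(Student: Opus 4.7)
The plan is to parallel the proof of Lemma~\ref{L:borne_sqf} by reducing the infinite case to a finite one. Fix $\bw$ infinite, let $q$ denote the quasiperiod of $f(\bw)$, and let $\alpha$ be an arbitrary letter. First I would exploit the fact that $f(\bw)$ is $q$-quasiperiodic to obtain arbitrarily long $q$-quasiperiodic finite prefixes: since consecutive $q$-occurrences in $f(\bw)$ are separated by at most $|q|$ positions, every prefix of $f(\bw)$ ending at the end of a $q$-occurrence is itself $q$-quasiperiodic. Moreover, for such prefixes of sufficient length, their own quasiperiod must equal $q$: otherwise, by a pigeonhole over the finitely many candidate words of length $<|q|$, some $q^\ast\neq q$ with $|q^\ast|<|q|$ would be the quasiperiod of infinitely many of them, and passing to the limit $f(\bw)$ would be $q^\ast$-quasiperiodic, contradicting minimality of $q$. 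From this I extract a finite prefix $u$ of $\bw$ such that $f(u)$ has quasiperiod exactly $q$.

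Then I mimic the selection of $i$ and $k$ from the proof of Lemma~\ref{L:borne_sqf}: take $i$ large enough that $|f(\alpha^i)|\geq 2|q|$ and $k$ large enough that $|f(u^k)|\geq |f(\alpha^i)|$. In the finite case Lemma~\ref{L:2} would immediately yield that $f(u^k\alpha^i u^k)$ is quasiperiodic with quasiperiod $q$. Here, since $f$ is only strongly quasiperiodic on infinite words, I apply the argument to the infinite word $u^k\alpha^i u^k \bw'$ for some infinite $\bw'$ (for instance $\alpha^\omega$); its image under $f$ is quasiperiodic by hypothesis, and by an adaptation of Lemma~\ref{L:2} using the $q$-quasiperiodic prefix $f(u^k)$ on the left, the quasiperiod of this infinite image is forced to coincide with $q$. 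In particular the middle factor $f(\alpha^i)=f(\alpha)^i$ sits inside a $q$-quasiperiodic word and has length at least $2|q|$, so $q$ occurs as a factor of $f(\alpha)^i$.

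Finally, superprimitivity of $q$ together with its occurrence as a factor of the periodic word $f(\alpha)^i$ forces $|q|<2|f(\alpha)|$, exactly as in Lemma~\ref{L:borne_sqf}; consequently $q$ is already a factor of $f(\alpha^3)$. As $\alpha$ was arbitrary, this shows $q\in{\cal Q}(f)$. The principal obstacle is the adaptation of Lemma~\ref{L:2} to the setting where the right-hand factor is an infinite word: one must conclude that the quasiperiod of $f(u^k\alpha^i u^k \bw')$ still equals $q$ without having that $f(u^k\alpha^i u^k)$ is itself quasiperiodic in the finite sense. A secondary technical difficulty is producing the prefix $u$ of $\bw$ with $f(u)$ of quasiperiod exactly $q$, since the natural sequence of lengths $|f(\bw[0..n-1])|$ need not align with the ends of $q$-occurrences in $f(\bw)$; one may need to tolerate a short misalignment and check that it does not affect the conclusion, or replace $u$ by a prefix of $\bw$ obtained by reading extra letters until the automaton ${\cal A}_q$ of Section~\ref{sec:decision_wqf} returns to an accepting state.
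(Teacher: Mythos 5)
Your plan has two unresolved gaps, both of which you flag yourself but neither of which is a mere technicality. First, the extraction of a prefix $u$ of $\bw$ with $f(u)$ quasiperiodic of quasiperiod exactly $q$ can simply fail: $f$ is only assumed strongly quasiperiodic on \emph{infinite} words, so no image $f(u)$ of a finite word need be quasiperiodic at all. The right ends of the $q$-occurrences in $f(\bw)$ form one infinite set of positions and the lengths $|f(u)|$ another, and nothing forces them ever to meet (in automaton terms, every prefix of $\bw$ may be accepted by ${\cal A}'_q$ while ${\cal A}_q$ never reaches a final state). Second, and more seriously, the ``adaptation of Lemma~\ref{L:2}'' that your argument hinges on does not exist as stated: the proof of Lemma~\ref{L:2} bounds the quasiperiod of $f(u^kvu^k)$ by making its \emph{prefix} occurrence overlap its \emph{suffix} occurrence (both forced because the quasiperiod is a border of the finite word), and from that overlap derives a contradiction with superprimitivity. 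An infinite word $f(u^k\alpha^iu^k\bw')$ has no suffix occurrence, so this mechanism is unavailable; consecutive occurrences of the quasiperiod of an infinite word may merely be adjacent, which yields no period and no length bound. Without that bound you cannot identify the quasiperiod of $f(u^k\alpha^iu^k\bw')$ with $q$, and the whole reduction collapses.

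The paper avoids both problems by never invoking Lemma~\ref{L:2} here. It first proves Lemma~\ref{L:borne_sqi_etape} for the eventually periodic words $u\alpha^\omega$: the quasiperiod of $f(u\alpha^\omega)$ occurs far inside the periodic tail $f(\alpha)^\omega$, hence is a factor of $f(\alpha)^i$, and superprimitivity then gives the length bound $<2|f(\alpha)|$ directly. Then, for an arbitrary $\bw$, it applies this to $p\alpha^\omega$ for every finite prefix $p$ of $\bw$ and uses a pigeonhole argument over the finitely many factors of $f(\alpha^3)$: some single $q$ works for infinitely many prefixes, and since every position of $f(\bw)$ is eventually covered by a $q$-occurrence lying inside some $f(p)$, the word $f(\bw)$ itself is $q$-quasiperiodic. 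If you want to salvage your own route, the lesson is to inject the periodic tail $\alpha^\omega$ (rather than an arbitrary $\bw'$) so that the quasiperiod is forced to live inside $f(\alpha)^\omega$, which is exactly what replaces the missing suffix-occurrence argument.
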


This result is a consequence of the next one whose proof is similar to the one of Lemma~\ref{L:borne_sqf} (without the need of Lemma~\ref{L:2}).

\begin{lemma}
\label{L:borne_sqi_etape}
If $f$ is a strongly quasiperiodic on infinite words morphism, then for any word $u$ and any letter $\alpha$, the quasiperiod of $f(u\alpha^\omega)$ is a factor of $f(\alpha^3)$ of length less than $2|f(\alpha)|$.
\end{lemma}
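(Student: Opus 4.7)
The plan is to mimic the proof of Lemma~\ref{L:borne_sqf}, but the infinite-word hypothesis lets us bypass the auxiliary construction $f(u^k \alpha^i u^k)$ and the appeal to Lemma~\ref{L:2}: since $f$ is strongly quasiperiodic on infinite words, $f(u\alpha^\omega)$ already has a well-defined quasiperiod $q$, and the crux is simply that arbitrarily long powers $f(\alpha)^i$ appear as factors of this quasiperiodic infinite word.

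Concretely, I would fix $q$ as the quasiperiod of $f(u\alpha^\omega)$ and choose $i$ large enough that $|f(\alpha^i)| \geq 2|q|$, which is possible because $f$ is non-erasing so $|f(\alpha)| \geq 1$. In a $q$-quasiperiodic word, consecutive starting positions of $q$ are at distance at most $|q|$, so any factor of length at least $2|q|-1$ contains a full occurrence of $q$. Applying this observation to $f(\alpha^i) = f(\alpha)^i$, which is a factor of $f(u\alpha^\omega)$, I conclude that $q$ is a factor of $f(\alpha)^i$.

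Next I would derive $|q| < 2|f(\alpha)|$ by contradiction: if instead $|q| \geq 2|f(\alpha)|$, then $q$ inherits period $|f(\alpha)|$ from the pure power $f(\alpha)^i$, and denoting by $p$ its prefix of length $|f(\alpha)|$, the equalities $q[j] = q[j + |f(\alpha)|]$ force occurrences of $p$ at positions $1, |f(\alpha)|+1, 2|f(\alpha)|+1, \ldots$ that cover $q$ entirely; hence $q$ is $p$-quasiperiodic with $|p| < |q|$, contradicting the superprimitivity of $q$.

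Finally, knowing $|q| < 2|f(\alpha)|$ and that $q$ occurs inside $f(\alpha)^i$, any such occurrence starts somewhere in one copy of $f(\alpha)$ and terminates strictly before the end of the copy two further along, so $q$ sits inside three consecutive copies of $f(\alpha)$, i.e., is a factor of $f(\alpha^3)$. The only mildly delicate ingredient is the standard combinatorial observation that a factor of length at least $2|q|-1$ of a $q$-quasiperiodic word contains $q$; the remaining bounds are short length computations, which is exactly why, as the author remarks, Lemma~\ref{L:2} is not needed here.
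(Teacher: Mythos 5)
Your overall route is exactly the one the paper intends: it states that this lemma is proved ``similarly to Lemma~\ref{L:borne_sqf} (without the need of Lemma~\ref{L:2})'', and your proof is precisely that adaptation --- the factor $f(\alpha)^i$ of the $q$-quasiperiodic word $f(u\alpha^\omega)$ replaces the auxiliary word $f(u^k\alpha^i u^k)$. The choice of $i$, the observation that a factor of length at least $2|q|-1$ of a $q$-quasiperiodic word contains an occurrence of $q$, and the final step placing $q$ inside three consecutive copies of $f(\alpha)$ are all correct.

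There is, however, a flaw in your justification of $|q| < 2|f(\alpha)|$. You claim that the occurrences of the prefix $p$ of length $|f(\alpha)|$ at positions $1, |f(\alpha)|+1, 2|f(\alpha)|+1, \ldots$ cover $q$ entirely; this fails whenever $|f(\alpha)|$ does not divide $|q|$, because the last $|q| \bmod |f(\alpha)|$ positions are left uncovered and $p$ need not occur as a suffix of $q$. Concretely, for $f(\alpha)=ab$ and $q=ababa$ the word $q$ is not $ab$-quasiperiodic (the final $a$ is uncovered). The conclusion is still correct, but the quasiperiod witnessing it must be chosen differently: since $q$ has period $P=|f(\alpha)|$ and $|q|\geq 2P$, take $r$ to be the prefix of $q$ of length $|q|-P$. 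The period relation $q[j]=q[j+P]$ shows $r$ is also a suffix of $q$, and since $|r|\geq P$ these two occurrences of $r$ overlap or abut and together cover every position of $q$; as $r\neq q$, this contradicts the superprimitivity of $q$. With this repair your argument is complete.
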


\begin{proof2}{Lemma~\ref{L:borne_sqi}}
Let $f$ be a strongly quasiperiodic on infinite words morphism.
Let $\bw$ be an infinite word and let $\alpha$ be a letter.
With each prefix $p$ of $\bw$, by Lemma~\ref{L:borne_sqi_etape}, 
one can associate a factor $q_p$ of $f(\alpha^3)$ such that $f(p\alpha^\omega)$ is $q_p$-quasiperiodic. As the set of factors of $f(\alpha^3)$ is finite, there exists one, say $q$, which is associated with an infinity of prefixes of $\bw$. This implies $\bw$ is $q$-quasiperiodic. \qed
\end{proof2}

Now we adapt the automaton used in the previous section in order to have a tool to determine if 
the image of an infinite word is $q$-quasiperiodic for a given morphism and a given word $q$.

\begin{definition}
Let $f$ be a morphism over $A^*$ and $q$ be a non-empty word. Let ${\cal A}_q'(f)$, or simply ${\cal A}_q'$, denote the automaton $(A, Q, i, F', \Delta)$ where $Q$, $i$, $\Delta$ are defined as in Definition~\ref{D:automate1}, and $F' = Q$.
\end{definition}

\begin{lemma}
\label{L:automate2}
An infinite word $\bw$ is $q$-quasiperiodic if and only if all its prefixes are recognized by ${\cal A}_q'$.
\end{lemma}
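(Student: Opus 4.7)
The plan is to read Lemma~\ref{L:automate2} through the lens of Fact~\ref{F:fonctionnement_automate}. Because ${\cal A}_q'$ is obtained from ${\cal A}_q$ just by declaring every state final, a finite word $u$ is recognized by ${\cal A}_q'$ exactly when the deterministic processing of $u$ does not get stuck, i.e., when $f(u)$ admits a decomposition $f(u) = \pi p s$ with $\pi p$ equal to $\varepsilon$, equal to $q$, or $q$-quasiperiodic, and $ps$ the longest prefix of $q$ that is a suffix of $f(u)$. Equivalently, $u \in {\cal L}({\cal A}_q')$ precisely means that $f(u)$ is a prefix of some $q$-quasiperiodic word. I would prove both directions of the equivalence (understood as ``$f(\bw)$ is $q$-quasiperiodic iff every finite prefix of $\bw$ is recognized by ${\cal A}_q'$'', which is the meaningful reading in this context) using this reformulation.

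For the ``if'' direction, assume every finite prefix $u$ of $\bw$ is recognized by ${\cal A}_q'$. Let $(p_u,s_u)$ be the reached state and write $f(u) = \pi_u p_u s_u$ per Fact~\ref{F:fonctionnement_automate}. Since $p_u s_u$ is a proper prefix of $q$, we have $|\pi_u p_u| \geq |f(u)| - |q|$, which tends to infinity as $u$ traverses larger and larger prefixes of $\bw$. Thus $f(\bw)$ admits arbitrarily long $q$-quasiperiodic prefixes $\pi_u p_u$, so every position of $f(\bw)$ is eventually covered by an occurrence of $q$, and $f(\bw)$ is $q$-quasiperiodic.

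For the ``only if'' direction, assume $f(\bw)$ is $q$-quasiperiodic and let $u$ be a finite prefix of $\bw$; I would prove by induction on $|u|$ that the automaton reaches a state after reading $u$. The base case is immediate. For the induction step $u = u' a$, the state $(p_1,s_1)$ reached after $u'$ gives a decomposition of $f(u')$ satisfying Fact~\ref{F:fonctionnement_automate}; appending $f(a)$ produces $f(u)$, which by assumption is still a prefix of the $q$-quasiperiodic word $f(\bw)$. One then checks that exactly one of the two transition cases of Definition~\ref{D:automate1} fits, according to whether $q$ does or does not occur in $p_1 s_1 f(a)$, and that the required data (the new longest prefix $p_2 s_2$ of $q$ ending $f(u)$, and a valid $\pi_2 p_2$) can be extracted. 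The main obstacle lies precisely here: one must verify that the $\pi_2 p_2$ delivered by the transition is again $\varepsilon$, $q$, or $q$-quasiperiodic. This requires a careful concatenation argument, combining the quasiperiodic covering of $\pi_1 p_1$ inherited from the inductive hypothesis with the new occurrences of $q$ created inside $p_1 s_1 f(a)$, and checking that these occurrences overlap sufficiently so that no uncovered position appears between the old and the new covering.
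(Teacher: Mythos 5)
The paper states Lemma~\ref{L:automate2} without proof (it is among the facts ``left to the reader''), so there is nothing to compare against line by line; judged on its own, your ``if'' direction is correct and complete: the covered prefixes $\pi_u p_u$ have length greater than $|f(u)|-|q|$, hence grow without bound, and their coverings by occurrences of $q$ are coverings inside $f(\bw)$.

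The ``only if'' direction, however, contains a genuine gap: you set up the induction and then explicitly defer the one step that carries all the content, namely showing that a transition out of $(p_1,s_1)$ on $a$ always exists and that the resulting $\pi_2p_2$ is again $\varepsilon$, $q$ or $q$-quasiperiodic. Note also that Definition~\ref{D:automate1} leaves open a third possibility you do not mention (neither case applies, e.g.\ $q$ does not occur in $p_1s_1f(a)$ but $|q|\leq|s_1f(a)|$, or no suffix $p_2$ of $p_1$ makes $p_2s_1f(a)$ a proper prefix of $q$), and ruling it out is precisely where the quasiperiodicity of all of $f(\bw)$ --- not merely of the prefix $f(u)$ --- must be invoked, since the occurrences of $q$ covering the tail of $f(u)$ may overhang into $f(\bw)$. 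A cleaner route avoids re-deriving the transitions altogether: Fact~\ref{F:fonctionnement_automate} is already an equivalence, so it suffices to exhibit, for each prefix $u$ of $\bw$, a decomposition $f(u)=\pi p s$ of the required form. For this, observe that in a $q$-quasiperiodic word the lengths of covered prefixes form a set with gaps at most $|q|$: if the prefix of length $m$ is covered and position $m+1$ is covered by an occurrence of $q$ starting at position $j\leq m+1$, then the prefix of length $j+|q|-1\leq m+|q|$ is covered. Taking the largest covered length $m\leq|f(u)|$ and the occurrence of $q$ through position $m+1$ (which, by maximality of $m$, overhangs the end of $f(u)$) produces the decomposition $f(u)=\pi p s$ with $\pi p$ covered, $|s|<|q|$ and $ps$ extendable to the longest proper prefix of $q$ that is a suffix of $f(u)$; Fact~\ref{F:fonctionnement_automate} then says $u$ reaches a state. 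Your reformulation ``$u\in{\cal L}({\cal A}_q')$ precisely means that $f(u)$ is a prefix of some $q$-quasiperiodic word'' is also stated before it is earned: its nontrivial half is exactly the direction you leave unfinished.
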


As a consequence of Lemmas~\ref{L:borne_sqi} and \ref{L:automate2}, we get 
next characterization of strongly quasiperiodic morphisms on finite words.

\begin{proposition}
\label{P:carac_strong_qp_on_infinite}
A morphism $f$ is strongly quasiperiodic on infinite words if and only if 
\[
A^* = \bigcup_{q \in {\cal Q}(f)} {\cal L}({\cal A}_q')
\]
where ${\cal L}({\cal A}_q')$ is the language recognized by the automaton ${\cal A}_q'$.
\end{proposition}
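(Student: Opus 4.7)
The plan is to prove both inclusions of the identity $A^* = \bigcup_{q \in {\cal Q}(f)} {\cal L}({\cal A}_q')$ under the strong quasiperiodicity hypothesis, using Lemma~\ref{L:borne_sqi} to restrict the candidate quasiperiods to the finite set ${\cal Q}(f)$ and Lemma~\ref{L:automate2} to translate $q$-quasiperiodicity of $f(\bw)$ into an automaton-theoretic statement about the prefixes of $\bw$. The inclusion $\bigcup_{q} {\cal L}({\cal A}_q') \subseteq A^*$ is immediate from the definition of ${\cal A}_q'$, so the content lies in the two implications relating strong quasiperiodicity of $f$ on infinite words to the covering of $A^*$ by the languages ${\cal L}({\cal A}_q')$.

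For the direction ``strongly quasiperiodic on infinite words implies the covering'', I would take an arbitrary $u \in A^*$ and fix any letter $\alpha$, then form the infinite word $\bw = u\alpha^\omega$. Since $f$ is strongly quasiperiodic on infinite words, $f(\bw)$ is $q$-quasiperiodic for some $q$, and Lemma~\ref{L:borne_sqi} forces $q \in {\cal Q}(f)$. Applying Lemma~\ref{L:automate2} to $\bw$, every prefix of $\bw$ is recognized by ${\cal A}_q'$; the word $u$, being such a prefix, therefore lies in ${\cal L}({\cal A}_q')$, establishing the covering.

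For the converse, I would start from an arbitrary infinite word $\bw$ and look at its (infinitely many) finite prefixes. Each prefix lies in some ${\cal L}({\cal A}_q')$ with $q \in {\cal Q}(f)$, and since ${\cal Q}(f)$ is finite (by the constraints in its definition), a pigeonhole argument yields a single $q \in {\cal Q}(f)$ for which infinitely many prefixes of $\bw$ belong to ${\cal L}({\cal A}_q')$. I then invoke the observation that, because every state of ${\cal A}_q'$ is final (by definition), the language ${\cal L}({\cal A}_q')$ is prefix-closed; hence \emph{every} prefix of $\bw$ belongs to ${\cal L}({\cal A}_q')$, being a prefix of an accepted longer prefix. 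Lemma~\ref{L:automate2} then delivers $q$-quasiperiodicity of $f(\bw)$, so $f$ is strongly quasiperiodic on infinite words.

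The one step that requires a little care is the second direction: even though each prefix of $\bw$ is accepted by \emph{some} ${\cal A}_q'$, the quasiperiod $q$ could a priori depend on the prefix, whereas Lemma~\ref{L:automate2} needs a \emph{uniform} $q$ for all prefixes. The combination of the finiteness of ${\cal Q}(f)$ (pigeonhole) with the prefix-closure of each ${\cal L}({\cal A}_q')$ is exactly what fixes a single $q$ valid for all prefixes simultaneously, and this is the only real obstacle in the argument.
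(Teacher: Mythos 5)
Your proof is correct and follows the same route the paper intends: Lemma~\ref{L:borne_sqi} to confine quasiperiods to the finite set ${\cal Q}(f)$, Lemma~\ref{L:automate2} to pass between $q$-quasiperiodicity of $f(\bw)$ and acceptance of prefixes, and the prefix-closure of ${\cal L}({\cal A}_q')$ (all states final) combined with pigeonhole to fix a single $q$ for all prefixes — the last point being exactly the observation the paper singles out. Your write-up simply makes explicit the details the paper leaves to the reader.
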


The proof of Proposition~\ref{P:carac_strong_qp_on_infinite} is a consequence of the previous definition and lemmas. To make all clearer, just observe that, if a word $u$ is recognized by ${\cal A}_q'$ then all its prefixes are also recognized.

As an example to illustrate Proposition~\ref{P:carac_strong_qp_on_infinite}, one can consider the morphism $f$ defined by $f(a) = ab$, $f(b) = aba$. Example~\ref{ex:aut1} shows that ${\cal A}_{aba}' = \{a, b\}^*$ and so $f$ is strongly quasiperiodic on infinite words.

On the same way, one can verify that the morphism $f$ defined by $f(a) = abaaba$ and $f(b) = aabaaba$ is strongly-quasiperiodic.
More precisely, the image of any infinite word beginning with $a$ is $abaa$-quasiperiodic and the image of any word beginning with $b$ is $aaba$-quasiperiodic. 

As a consequence of Proposition~\ref{P:carac_strong_qp_on_infinite}, we have next result.

\begin{corollary}
It is decidable whether a morphism is strongly quasiperiodic on infinite words.
\end{corollary}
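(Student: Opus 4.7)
The plan is to reduce the decision problem to a finite computation followed by a standard regular-language equality test, using Proposition~\ref{P:carac_strong_qp_on_infinite} as a black box. First I would observe that the set ${\cal Q}(f)$ is finite and effectively computable: it consists of factors of $f(\alpha)^3$ of length at most $2|f(\alpha)|$, ranging over the finitely many letters $\alpha \in A$, so one can enumerate all its elements in polynomial time in the total size of $f$.

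Next, for each $q \in {\cal Q}(f)$, I would construct the automaton ${\cal A}_q'$ following the definition. The state set consists of pairs $(p,s)$ with $ps$ a proper prefix of $q$, hence there are at most $|q|^2 \leq 4|f(\alpha)|^2$ states, and the transition relation is computable from $q$ and $f$: for each state $(p_1,s_1)$ and each letter $a$, one inspects the word $p_1 s_1 f(a)$ to determine which of the two cases of Definition~\ref{D:automate1} applies, then computes the (unique) successor state when it exists. The only subtlety here is verifying $q$-quasiperiodicity of the relevant prefix in case~2, but this is a finite check on a word of bounded length, and itself decidable by known algorithms on finite words~\cite{ApostolicoEhrenfeucht1993TCS,BrodalPedersen2000,IliopoulosMouchard1999a}. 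Since every final-state set $F'$ is taken to be the whole state set $Q$, no additional work is required.

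Finally, by Proposition~\ref{P:carac_strong_qp_on_infinite}, $f$ is strongly quasiperiodic on infinite words if and only if the finite union $\bigcup_{q \in {\cal Q}(f)} {\cal L}({\cal A}_q')$ equals $A^*$. Since each ${\cal L}({\cal A}_q')$ is a regular language given by an explicit finite automaton, one can form the union (via a product or disjoint-union construction), complement it, and test the resulting automaton for emptiness; this is a classical decidable procedure. Hence the whole property is decidable.

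The only step that requires care is the effective construction of the transitions of ${\cal A}_q'$ in case~2 of Definition~\ref{D:automate1}, since one must locate the correct decomposition $xs_1f(a) = ys_2$ and verify that $y$ is either equal to $q$ or $q$-quasiperiodic; but because $|x| \leq |p_1| < |q|$ and $|s_1 f(a)|$ is bounded by $|q| + \max_a |f(a)|$, this is a bounded computation and poses no obstacle to decidability.
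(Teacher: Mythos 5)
Your proposal is correct and follows exactly the route the paper intends: the corollary is stated there as a direct consequence of Proposition~\ref{P:carac_strong_qp_on_infinite}, and your write-up simply supplies the routine effectivity details (finiteness and computability of ${\cal Q}(f)$, effective construction of each ${\cal A}_q'$, and the standard union/complement/emptiness test for equality with $A^*$). No gaps.
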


\section{\label{sec:weakly}On Weakly Quasiperiodic Morphisms}

We now consider the decidability of the questions: given a morphism $f$, is $f$ weakly quasiperiodic on finite words? Is it weakly quasiperiodic on infinite words? Note that this is equivalent to asking for the decidability of the question: given a morphism, are all images of non-quasiperiodic words also non-quasiperiodic? We provide some partial answers.

Let us recall that a morphism $f$ is said \textit{prefix}
(resp. \textit{suffix}) if for all letters $a$ and $b$, $f(a)$ is not
a prefix (resp. a suffix) of $f(b)$.

\begin{lemma}
\label{L:prefixsuffix}
Any non-prefix or non-suffix non-erasing morphism defined on an alphabet of cardinality at least two is weakly quasiperiodic on finite and infinite words.
\end{lemma}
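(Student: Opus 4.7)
\begin{proof2}{Lemma~\ref{L:prefixsuffix}}
The plan is to exhibit, in each case, an explicit non-quasiperiodic word whose image under $f$ is quasiperiodic. I will describe the non-prefix case in detail; the non-suffix case is completely analogous. So fix two distinct letters $a, b$ of the alphabet with $f(a)$ a prefix of $f(b)$, and write $f(b) = f(a)\,v$ with $v$ a possibly empty word.

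For the finite witness I plan to take $u = bba$, which is non-quasiperiodic because the unique letter $a$ (at position~$3$) forces any candidate quasiperiod to contain~$a$ and hence to occur at most once in $u$, covering too few positions (the only one-occurrence candidate of length~$3$ is $u$ itself, which is excluded). Setting $q = f(b)\,f(a)$, I would verify that $f(u) = f(b)\,f(b)\,f(a)$ is $q$-quasiperiodic: clearly $q$ is a suffix of $f(u)$, and since $f(a)$ is a prefix of $f(b)$, the length-$|q|$ prefix of $f(u)$ is $f(b)$ concatenated with the length-$|f(a)|$ prefix of the second $f(b)$, giving exactly $f(b)\,f(a) = q$. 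The two resulting occurrences of $q$ at positions $1$ and $|f(b)|+1$ overlap on the $|f(a)| \geq 1$ central letters and together cover $f(u)$.

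For the infinite witness I plan to take $\bw = b\,a\,b^\omega$, non-quasiperiodic by the same ``unique~$a$'' argument (a quasiperiod would contain $a$ and hence have at most one occurrence, which cannot cover an infinite word). Keeping $q = f(b)\,f(a) = f(a)\,v\,f(a)$, we have $f(\bw) = q \cdot (f(a)\,v)^\omega$. The initial occurrence of $q$ covers positions $1$ to $|q|$; and because $f(a)$ is a prefix of $f(b)$, $q$ is a prefix of the periodic tail of period $|f(a)|+|v|$, so $q$ occurs in that tail at each position $|q|+1 + k(|f(a)|+|v|)$ for $k \geq 0$. Consecutive tail occurrences overlap on $|q| - (|f(a)|+|v|) = |f(a)|$ letters and the first of them at position $|q|+1$ is adjacent to the prefix occurrence, so every position of $f(\bw)$ lies in some occurrence of $q$.

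The non-suffix case (now $f(b) = w\,f(a)$) runs in parallel: for the finite witness I use $u = abb$ with quasiperiod $f(a)\,f(b)$, the prefix/suffix and overlap computations being the mirror of the above; for the infinite witness I use $\bw = a\,b^\omega$, observing that $f(\bw) = f(a)\,(w\,f(a))^\omega$ regroups as $(f(a)\,w)^\omega$, a periodic and therefore $(f(a)\,w)$-quasiperiodic word. The only mildly delicate steps are identifying the candidate quasiperiod as both a prefix and a suffix of $f(u)$ (respectively as a prefix of the periodic tail of $f(\bw)$) using the prefix/suffix hypothesis, and the elementary regrouping $f(a)\,(w\,f(a))^\omega = (f(a)\,w)^\omega$; neither poses a real obstacle.
\end{proof2}
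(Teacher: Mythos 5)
Your proof is correct and takes essentially the same route as the paper: the paper's witnesses are $f(b^ka)$ and $f(bab^\omega)$ with quasiperiod $f(ba)$ in the non-prefix case, and $f(ab^k)$ and $f(ab^\omega)$ with quasiperiod $f(ab)$ in the non-suffix case, which coincide with your choices (with $k=2$). Your write-up merely carries out the overlap verification in more detail than the paper does.
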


\begin{proof}If $f(a)$ is a prefix of $f(b)$ then, for all $k \geq 1$, the
  finite word $f(b^ka)$ is $f(ba)$-quasiperiodic. The infinite word
  $f(bab^\omega)$ is also $f(ba)$-quasiperiodic. The morphism $f$ is
  weakly quasiperiodic both on finite words and on infinite words.

If $f(a)$ is a suffix of $f(b)$ then, for all $k \geq 1$, the
  finite word $f(ab^k)$ is $f(ab)$-quasiperiodic. The infinite word
  $f(ab^\omega)$ is  $f(ab)$-quasiperiodic (it is even periodic). The morphism $f$
  is weakly quasiperiodic both on finite words and on infinite words.
\end{proof}

\begin{corollary}
Any non-injective non-erasing morphism defined on an alphabet of cardinality at least two is weakly quasiperiodic on finite and infinite words.
\end{corollary}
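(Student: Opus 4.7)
The plan is to reduce this to Lemma~\ref{L:prefixsuffix}: it suffices to prove that every non-injective non-erasing morphism on an alphabet of size at least two is either non-prefix or non-suffix. In fact I will show such a morphism must be non-prefix, which relies on the classical fact that a morphism whose letter-images form a prefix code is necessarily injective. Applying the contrapositive of this fact gives precisely the ingredient needed to invoke Lemma~\ref{L:prefixsuffix}.

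More concretely, let $f : A^* \to A^*$ be non-erasing and non-injective, with $|A| \geq 2$, and pick distinct words $u, v$ in $A^*$ with $f(u) = f(v)$. If there already exist two distinct letters $a, b$ with $f(a) = f(b)$, then $f(a)$ is a prefix of $f(b)$, so $f$ is non-prefix and we are done. Otherwise the letter-images $\{f(c) : c \in A\}$ are pairwise distinct. Decompose $u = u_1 \cdots u_m$ and $v = v_1 \cdots v_n$ as sequences of letters. Because $f$ is non-erasing, neither letter-sequence can be a strict prefix of the other (otherwise applying $f$ to the excess letters would yield $\varepsilon$), hence there is a smallest index $i$ with $u_i \neq v_i$, and $i \leq \min(m,n)$. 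Cancelling the common image $f(u_1 \cdots u_{i-1}) = f(v_1 \cdots v_{i-1})$ on both sides of $f(u) = f(v)$ leaves $f(u_i \cdots u_m) = f(v_i \cdots v_n)$, which forces $f(u_i)$ and $f(v_i)$ to be prefix-comparable. Since $u_i \neq v_i$ and the letter-images are pairwise distinct, one of $f(u_i), f(v_i)$ must be a proper prefix of the other, so $f$ is non-prefix.

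Having established that $f$ is non-prefix, Lemma~\ref{L:prefixsuffix} immediately concludes that $f$ is weakly quasiperiodic on both finite and infinite words. The only delicate point in the sketch above is confirming that a first differing index exists in the letter decompositions of $u$ and $v$, which is where non-erasingness enters; beyond that, the argument is a standard prefix-code calculation and presents no real obstacle.
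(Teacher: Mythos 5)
Your proof is correct and takes essentially the same route as the paper's: reduce non-injectivity to the failure of the prefix property (by cancelling the common prefix of $u$ and $v$, using non-erasingness to rule out one being a prefix of the other) and then invoke Lemma~\ref{L:prefixsuffix}. Your version merely spells out in detail what the paper compresses into ``we can assume that $u$ and $v$ begin with different letters.''
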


\begin{proof}
If $f$ is not injective, there exist two different words $u$ and $v$ such that $f(u) = f(v)$. If $f(u)$ and $f(v)$ are powers of  same word then $f$ is erasing: a contradiction.
Otherwise, we can assume that $u$ and $v$ begin with different letters. Thus $f$ is not prefix and so, by Lemma~\ref{L:prefixsuffix}, it is weakly quasiperiodic on finite and infinite words.
\end{proof}

\begin{proposition}
\label{P:primitive}
Let $f$ be a non-erasing morphism and let $u$ be a primitive word over $\{a, b\}$. 
If $f(u)$ is not primitive then $f$ is weakly quasiperiodic on finite words.
Moreover, if $|u|_a \geq 1$ and $|u|_b \geq 1$, then $f$ is weakly quasiperiodic on infinite words.
\end{proposition}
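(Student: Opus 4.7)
The plan is to split the argument by the length of $u$ and to exploit a standard rigidity result for binary substitutions in the main case.

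If $|u|=1$ then $u$ is a single letter, and $f(u)$ non-primitive means $f(u)=v^k$ for some $v\neq\emptyword$ and some $k\geq 2$. The word $v^k$ is $v$-quasiperiodic while the single letter $u$ is not quasiperiodic, so $u$ itself is the witness that yields weak quasiperiodicity on finite words. This case is excluded from the ``moreover'' hypothesis, which forces $|u|\geq 2$.

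Now assume $|u|\geq 2$. Since $u$ is primitive over $\{a,b\}$, both letters necessarily occur in $u$. I will invoke the following classical fact (see e.g.\ Lothaire): if $f(a)$ and $f(b)$ are not powers of a common word, then $f(w)$ is primitive for every primitive $w\in\{a,b\}^*$ of length at least two. The contrapositive, applied to our primitive $u$ whose image is non-primitive, produces a word $z$ and positive integers $m,n$ such that $f(a)=z^m$ and $f(b)=z^n$. From this rigidity the two conclusions are almost immediate. For finite words, take $w=ab$: it is non-quasiperiodic (being of length two with two distinct letters) while $f(ab)=z^{m+n}$ with $m+n\geq 2$ is $z$-quasiperiodic. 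For the ``moreover'' claim, under $|u|_a\geq 1$ and $|u|_b\geq 1$ (which in particular forces $|u|\geq 2$), any infinite word $\bw\in\{a,b\}^\omega$ satisfies $f(\bw)=z^\omega$, and $z^\omega$ is $z$-quasiperiodic; since the introduction recalls the existence of non-quasiperiodic Sturmian infinite words over $\{a,b\}$, picking such a $\bw$ provides the required non-quasiperiodic witness.

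The main obstacle is justifying the rigidity statement: a non-erasing binary morphism that sends some primitive word of length at least two to a proper power must have its two letter images equal to powers of a common word. This is a standard consequence of the fact that two non-commuting words form a code, combined with the observation that, inside the free monoid generated by such a code, the occurrence of a proper power among concatenations of generators forces an algebraic relation of Lyndon--Sch\"utzenberger type between the generators. In the write-up I would either cite this classical result directly or give a short self-contained argument based on code-theoretic and Fine--Wilf considerations.
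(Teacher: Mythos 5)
There is a genuine gap: the ``classical fact'' on which your main case rests is false. It is not true that a non-erasing binary morphism whose letter images are not powers of a common word must send every primitive word of length at least two to a primitive word. Take $f(a)=a$ and $f(b)=baaba$: then $a$ and $baaba$ do not commute (so they are not powers of a common word), yet $f(ab)=abaaba=(aba)^2$ is imprimitive while $ab$ is primitive. So the contrapositive you invoke does not produce the rigidity $f(a)=z^m$, $f(b)=z^n$, and your argument collapses exactly on the instances the proposition is really about. The correct classical result here (Lentin--Sch\"utzenberger, the one the paper cites) is much weaker: if some $f(u)$ is imprimitive then some $f(v)$ with $v\in a^*b\cup ab^*$ is imprimitive. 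It reduces the problem to words of the form $a^ib$ or $ab^j$, but it does not force the letter images to commute.

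That reduction is where the real work starts, and it is what your proposal skips. In the paper, Lemma~\ref{L:primitive} analyzes the equation $f(a)^if(b)=u^k$ case by case (Lyndon--Sch\"utzenberger when $i,j\geq 2$, Fine--Wilf and the conjugacy lemma \cite[Lem.~1.3.4]{Lothaire1983book} otherwise) and in each non-commuting subcase exhibits a \emph{specific} non-quasiperiodic infinite word among $ab^\omega$, $aba^\omega$, $ba^\omega$, $bab^\omega$ whose image is quasiperiodic --- e.g.\ in the example above $f(bab^\omega)$ is $baaba$-quasiperiodic even though no common root $z$ exists. Your handling of the commuting case and of $|u|=1$ is fine and matches the paper, but to repair the proof you must replace the false rigidity claim by an argument of this kind (or cite Lentin--Sch\"utzenberger correctly and then do the word-equation analysis for $f(a^ib)=u^k$ yourself).
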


We first need an intermediate result.

\begin{lemma}
\label{L:primitive}
If $f(a^ib^j)$ is not primitive for some integers $i \geq 1$, $j \geq 1$, then one of the words $f(ab^\omega)$, $f(aba^\omega)$, $f(ba^\omega)$, $f(bab^\omega)$ is quasiperiodic.
\end{lemma}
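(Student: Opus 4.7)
The plan is to extract an explicit quasiperiod from the relation $f(a)^i f(b)^j = w^n$ ($n \geq 2$) by a structural analysis. Set $x = f(a)$ and $y = f(b)$. First, if $x$ and $y$ commute, they are both powers of a common word $r$, and then $f(ab^\omega) = r^\omega$ is trivially $r$-quasiperiodic, settling this case. Hereafter assume $x$ and $y$ do not commute.

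Next, I would reduce to the equation $xy = w^n$. By Lyndon--Sch\"utzenberger's theorem, $x^i y^j = w^n$ with $i, j \geq 2$ would force $x, y, w$ to be powers of a common root, contradicting non-commutation; so $i = 1$ or $j = 1$. The symmetry $a \leftrightarrow b$ merely permutes the four target words among themselves, so assume $j = 1$. A Fine--Wilf-type periodicity argument on the prefix $x^i$ of $w^n$ (which carries both periods $|x|$ and $|w|$) is then needed to exclude $i \geq 2$, reducing to $xy = w^n$. This reduction from general $(i, j)$ down to $(1, 1)$ is the step I expect to be the main obstacle, since classical Lyndon--Sch\"utzenberger does not directly forbid $x^i y = w^n$ with $i, n \geq 2$ and non-commuting $x, y$.

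Once $xy = w^n$, apply the standard factorization: there exist $w_1, w_2$ with $w = w_1 w_2$, $x = w^k w_1$, and $y = w_2 w^{n-k-1}$ for some $0 \leq k \leq n-1$, and both $w_1, w_2$ must be nonempty (else $x$ and $y$ commute). Setting $w' = w_2 w_1$, a direct computation gives $yx = (w')^n$ and $y = (w')^{n-k-1} w_2$, so $f(bab^\omega) = yxy^\omega$ has the form $(w')^n \cdot y^\omega$, in which $y$ occurs at each position $j|w'|$ for $0 \leq j \leq n$ in the prefix $(w')^n$ (the occurrence at $j = n$ coinciding with the first $y$ of $y^\omega$), and every $|y|$-step thereafter. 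For $k \leq n-2$, consecutive such occurrences at $j|w'|$ and $(j+1)|w'|$ overlap by $(n-k-2)|w'| + |w_2| > 0$ and therefore cover every position, so $f(bab^\omega)$ is $y$-quasiperiodic. The symmetric argument applied to the morphism obtained by swapping $a \leftrightarrow b$ (equivalently, replacing $k$ by $n-k-1$) shows that $f(aba^\omega)$ is $x$-quasiperiodic whenever $k \geq 1$. Since $\{0, \ldots, n-2\} \cup \{1, \ldots, n-1\} = \{0, \ldots, n-1\}$, one of the two alternatives always yields a quasiperiodic infinite word.
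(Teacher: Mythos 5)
Your reduction to the case $xy=w^n$ does not go through, and you correctly sensed that this is where the trouble lies. It is simply false that $x^iy=w^n$ with $i,n\geq 2$ forces $x$ and $y$ to commute: take $x=f(a)=ab$ and $y=f(b)=cababc$, so that $f(aab)=(ababc)^2$ is not primitive while $f(a)$ and $f(b)$ do not commute. Fine--Wilf gives nothing here because the common prefix of $x^i$ and $w^n$ has length $|x^i|=4$, which is less than $|x|+|w|=7$; the periodicity argument only applies when $|x^{i-1}|\geq|w|$. So the case $i\geq 2$, $j=1$ with non-commuting images genuinely occurs and must be handled on its own, and in that regime the quasiperiods are no longer $f(a)$ or $f(b)$ themselves. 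This is exactly what the paper's proof does: writing $f(a^ib)=u^k$ with $u$ primitive, it distinguishes $|f(a)^{i-1}|\geq|u|$ (where Fine--Wilf does yield commutation), $|u|\geq|f(a)^i|$ (where $u=f(a)^ix$, $f(b)=xu^{k-1}$, and $f(bab^\omega)$ is $x f(a)^i x$-quasiperiodic), and the intermediate case $|f(a)^{i-1}|<|u|<|f(a)^i|$ (where a conjugacy analysis $f(a)=xy=yz$ produces $f(a)=(\alpha\beta)^{\ell+1}\alpha$ and shows $f(aba^\omega)$ is $\alpha\beta\alpha$-quasiperiodic). Without an argument covering these two subcases, your proof only establishes the lemma for $i=j=1$ (plus the commuting and $i,j\geq 2$ cases).

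For what it is worth, your treatment of the surviving case $xy=w^n$ is correct and rather cleaner than the paper's: the factorization $x=w^kw_1$, $y=w_2w^{n-k-1}$ with $w_1,w_2$ nonempty, together with the observation that $k\leq n-2$ makes $f(bab^\omega)$ $y$-quasiperiodic and $k\geq 1$ makes $f(aba^\omega)$ $x$-quasiperiodic, is a nice explicit dichotomy. But it does not extend to $i\geq 2$ by substituting $X=x^i$, since that would only give quasiperiodicity of $f(a^iba^\omega)$ or $f(ba^ib^\omega)$, which are not among the four words the lemma allows.
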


\begin{proof}
Assume first $i \geq 2$, $j \geq 2$. By Lyndon-Sch\"utzenberger's characterization of solutions of the equation $x^i y^j = z^k$ when $i \geq 2$, $j \geq 2$, $k \geq 2$ \cite{LyndonSchutz}, we deduce that $f(a)$ and $f(b)$ are powers of a same word: $f(ab^\omega)$ is quasiperiodic, as any image of a finite (of length at least 2) or of an infinite word.

Now consider case $j = 1$. Let $u$ be the primitive word such thay $f(a^ib) = u^k$ ($k \geq 2$). If $|f(a)^{i-1}| \geq |u|$, the words $f(a)^i$ and $u^k$ share a common prefix of length at least $|f(a)| + |u|$. By Fine and Wilf's theorem \cite{FineWilf1965}, $f(a)$ and $u$ are powers of a same word. It follows that $f(a)$ and $f(b)$ are also powers of a same word. We conclude as in case $i, j \geq 2$.

Now consider the case $|u| \geq |f(a)^i|$. From $f(a)^if(b) = u^k$, we get $u = f(a)^ix$, $f(b) = xu^{k-1}$ for some word $x$. Hence $f(b) = x(f(a)^ix)^{k-1}$ and the word $f(bab^\omega)$ is $x(f(a)^ix)$-quasiperiodic.

It remains to consider the case  $|f(a)^{i-1}| < |u| < |f(a)^i|$. In this case, for some words $x$ and $y$, $u = f(a)^{i-1}x$, $f(a) = xy$ and $y$ is a prefix of $u$. In particular, for some word $z$, $f(a) = xy = yz$. By  a classical result in Combinatorics on Words (see \cite[Lem. 1.3.4]{Lothaire1983book}), $x = \alpha\beta$, $y = (\alpha \beta)^\ell \alpha$, $z = \beta\alpha$: $f(a) = (\alpha \beta)^{\ell+1}\alpha$, $u = [(\alpha\beta)^{\ell+1}\alpha]^{i-1}\alpha\beta$. Now observe that 
$yf(b) = u^{k-1} = [[(\alpha\beta)^{\ell+1}\alpha]^{i-1}\alpha\beta]^{k-1}$.
When $ i\geq 2$, $f(b) = \beta\alpha [(\alpha\beta)^{\ell+1}\alpha]^{i-2}\alpha\beta[[(\alpha\beta)^{\ell+1}\alpha]^{i-1}\alpha\beta]^{k-2}$, and when $i = 1$,
$f(b) = \beta(\alpha\beta)^{k-\ell-2}$. In both cases, $f(aba^\omega)$ is $\alpha\beta\alpha$-quasiperiodic.

When $i = 1$, the non-primitivity of $f(ab^j)$ is equivalent to the non-primitivity of $f(b^ja)$. Thus exchanging the roles of $a$ and $b$, we end the proof of the lemma. 
\end{proof}

\begin{proof2}{Proposition~\ref{P:primitive}}
First if $u$ contains only the letter $a$ or only the letter $b$, we have $u = a$ or $u =b$ and $f$ is weakly quasiperiodic on finite words. Assume from now on that $|u|_a \geq 1$ and $|u|_b \geq 1$.
If $|u|_a = 1$, then there exist integers $i, j$ such that $u =
b^iab^j$ with $i+j \geq 1$. As $f(u)$ is not primitive, also
$f(ab^{i+j})$ is not primitive: $f$ is weakly quasiperiodic on
  finite words. By Lemma~\ref{L:primitive}, $f$ is also weakly
quasiperiodic on infinite words.
The result follows similarly when $|u|_b = 1$.
Now consider the case $|u|_a \geq 2$ and $|u|_b \geq 2$. A seminal result by Lentin and Sch\"utzenberger states that if $f$ is a morphism defined on alphabet $\{a, b\}$, if for a non-empty word $u$, $f(u)$ is not primitive then there exists a word $v$ in $a^*b \cap ab^*$ such that $f(v)$ is not primitive \cite[Th. 5]{LentinSchutzenberger1969}. We are back to previous cases.\qed
\end{proof2}

\medskip

The converse of Proposition~\mbox{\ref{P:primitive}} is false. Indeed
as shown by the morphism  $f$ defined by $f(a) = ababa$, $f(b) = ab$, a morphism
can be weakly quasiperiodic on finite words or on infinite words and
be primitive preserving (the image of any primitive word is
primitive).
Nevertheless observe that when we consider the problem of deciding
if a morphism is weakly quasiperiodic on infinite words, we can assume
that all images of letters are primitive. Indeed any morphism $f$ such that $f(a)$ is a non-empty power of $a$ for each letter $a$ is not weakly quasiperiodic: for any word (finite of length at least 2 or infinite) $w$, $f(w)$ is quasiperiodic if and only if $w$ is quasiperiodic. In consequence, to determine whether a morphism $f$ is weakly quasiperiodic or not, one can substitute $f$  by the morphism $r_f$ defined by $r(a)$ is the primitive root of $f(a)$. Note that images of letters by $r_f$ are primitive words.

\medskip
For all weakly quasiperiodic on infinite words morphisms met until now, there exist non-empty words $u$ and $v$ such that the infinite word $uv^\omega$ is not quasiperiodic while $f(uv^\omega)$ is quasiperiodic.
This situation also holds in the next lemma (when $\bw$ in the hypothesis is not quasiperiodic) whose proof is omitted. We conjecture that this holds in all cases. Bounding the length of $u$ and $v$ could lead to a procedure to check whether a morphism is weakly quasiperiodic on infinite words.

\begin{lemma}
\label{L:demiqp}
Let $f$ be a morphism, and let $\bw$ be an infinite word such that $f(\bw)$ is $q$-quasiperiodic for some word $q$ such that $2|q| \leq |f(\alpha)|$ for each letter $\alpha$. Then:
\begin{enumerate}
\item $\bw = (a_1\ldots a_k)^\omega$ with $a_1$, \ldots, $a_k$ pairwise different letters, or,
\item there exist words $x$, $y$, $z$ and letters $a$ and $b$ such
  that $|xyz|_a = 0$, $|z|_b = 0$, $xay(bz)^\omega$ is not
  quasiperiodic and $f(xay(bz)^\omega)$ is $q$-quasiperiodic. Moreover
  in this case, we can find $x$, $y$ and $z$ such that any letter
  occurs at most once {in} each of these words.
\end{enumerate}
\end{lemma}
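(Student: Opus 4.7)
The plan is to leverage the automaton $\mathcal{A}_q'$ of Section~\ref{sec:decision_wqi}. By Lemma~\ref{L:automate2}, $\bw$ induces an infinite accepting path $\sigma_0, \sigma_1, \sigma_2, \ldots$ in $\mathcal{A}_q'$, and conversely any infinite word $\bw^*$ whose prefixes are all recognized by $\mathcal{A}_q'$ has $f(\bw^*)$ $q$-quasiperiodic. Assuming case~(1) fails, I aim to exhibit such an accepting path of the shape $\bw^* = xay(bz)^\omega$ satisfying the remaining requirements of case~(2). A useful preliminary observation is that once $|xyz|_a = 0$ and $a \neq b$ are arranged, the letter $a$ occurs exactly once in $\bw^*$; any quasiperiod would then contain $a$ and hence could occur at most once in $\bw^*$, which is incompatible with $\bw^*$ being an infinite quasiperiodic word. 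Thus non-quasiperiodicity of $\bw^*$ is automatic and does not require a separate argument.

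The construction proceeds by pigeonhole in $\mathcal{A}_q'$. Since the automaton is finite, some state $\sigma$ is visited infinitely often along $\bw$'s path, and a second pigeonhole produces a letter $b$ such that $\sigma_n = \sigma$ and $w_{n+1} = b$ for infinitely many $n$. Hence there is a cycle at $\sigma$ in $\mathcal{A}_q'$ whose label begins with $b$. I choose such a cycle of minimum length, denote its label $bz$, and claim the labels of $bz$ are pairwise distinct. The hypothesis $2|q| \leq |f(\alpha)|$ is the key ingredient here: combined with Fact~\ref{F:fonctionnement_automate} it implies that after reading a letter, the $ps$-component of the resulting state depends on that letter alone. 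A careful inspection of the transition rule in Definition~\ref{D:automate1} then lets one splice a strictly shorter subcycle whenever a letter would repeat along the minimum cycle, contradicting minimality. In particular $|z|_b = 0$ and each letter of $z$ occurs at most once in $z$.

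For the preperiod, I take a minimum-length path in $\mathcal{A}_q'$ from $\sigma_0$ to $\sigma$ and argue injectivity of its labels by the same splicing. The failure of case~(1) is invoked to isolate a letter $a$ outside $\{b\}\cup\mathrm{letters}(z)$: if every letter used along some accepting path of $\mathcal{A}_q'$ lay inside $\{b\}\cup\mathrm{letters}(z)$, the $ps$-collapse would force every recognized infinite word into the shape $(a_1\ldots a_k)^\omega$ with pairwise distinct letters, contradicting the hypothesis on $\bw$. Splitting the preperiod accordingly as $x \cdot a \cdot y$ and setting $\bw^* = xay(bz)^\omega$ yields an accepting path, so $f(\bw^*)$ is $q$-quasiperiodic and $\bw^*$ satisfies all the injectivity conditions of case~(2). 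The main technical obstacle, on which the whole plan rests, is rigorously proving the ``minimum cycle has pairwise distinct labels'' claim under $2|q|\leq|f(\alpha)|$; once that is pinned down, the remainder is pigeonhole and bookkeeping.
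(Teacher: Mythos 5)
Your proposal has two genuine gaps, one acknowledged and one not.

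First, the acknowledged one is not a minor loose end: the ``$ps$-collapse'' (that after reading a letter $\alpha$ with $2|q|\leq|f(\alpha)|$ the state of ${\cal A}_q'$ depends on $\alpha$ alone) is exactly where the hypothesis $2|q|\leq|f(\alpha)|$ does its work, and it is not immediate. The length bound does force $p_2s_2$ (the longest prefix of $q$ that is a suffix of $f(u\alpha)$) to depend only on $f(\alpha)$, but the \emph{split} into $p_2$ and $s_2$ records how far the chain of overlapping occurrences of $q$ starting at position $0$ reaches, and whether an occurrence of $q$ inside $f(\alpha)$ is connected to that chain depends a priori on the occurrences straddling the boundary with $f(u)$, i.e.\ on the previous state. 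Making this work is precisely the content of the three ``immediate consequences'' that open the paper's proof (the middle of each $f(\alpha)$ is forced to be $q$-quasiperiodic and the junctions are forced to glue), so declaring it ``the main technical obstacle'' and leaving it open means the proof is incomplete at its core. Note also that the paper does not use the automaton at all here; it works directly with the covering of $f(\bw)$ and a recombination property for chained factors $a_iu_ia_{i+1}$, which is a cleaner way to obtain the same splicing freedom.

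Second, and more seriously, your derivation of the distinguished letter $a$ is wrong. You claim that if every letter used lay inside $\{b\}\cup\mathrm{letters}(z)$, then every recognized infinite word would have the shape $(a_1\ldots a_k)^\omega$ with distinct letters. That is false: take $\bw=bbc(bc)^\omega$ over the alphabet $\{b,c\}$. Even granting the collapse, the letter digraph has edges $b\to b$, $b\to c$, $c\to b$, the recognized infinite words include $b^\omega$, $(bc)^\omega$, $bcb^\omega$, $bbcb^\omega$, and there is no letter outside the cycle alphabet; yet case~(1) fails for $\bw$. Case~(2) still holds (e.g.\ $x=bb$, $a=c$, $y=\varepsilon$, $z=\varepsilon$ gives the non-quasiperiodic $bbcb^\omega$), but its distinguished letter is a \emph{recurrent} letter of $\bw$ that happens to occur only once in the constructed word, not a letter absent from the cycle; your recipe (minimum cycle by pigeonhole, then look for an outside letter in a minimum pre-period) cannot produce it. What is needed instead is the case analysis the paper carries out: first letter not recurrent; some letter not recurrent; all letters recurrent but $\bw$ not of shape~(1), in which case one finds a factor $bvb$ avoiding some letter $c$ together with a factor $bxcyb$ and builds $azbxcy(bv)^\omega$. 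This case analysis, together with the final trimming that enforces ``each letter at most once in each of $x,y,z$'', is the substantive part of the argument and is missing from your plan. Your observation that a letter occurring exactly once forces non-quasiperiodicity is correct and matches what the paper uses implicitly.
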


\section{\label{sec:conclusion}Conclusion}

To conclude this paper on links between quasiperiodicity and
morphisms, we point out another question. Given a morphism $f$
prolongable on a letter $a$, can we decide whether the word
$f^\omega(a) = \lim_{n \to \infty} f^n(a)$ is quasiperiodic? We are
convinced that a better knowledge of weakly and strongly quasiperiodic
on infinite words {morphisms} could bring answers to the previous
question. We suspect in particular that if $f$ is a strongly
quasiperiodic on infinite words morphism and if it is prolongable on
$a$, then $f^\omega(a)$ is quasiperiodic. Conversely it should be true
that if $f^\omega(a)$ is quasiperiodic and $f(a)$ is not a power of
$a$ then $f$ is weakly quasiperiodic on infinite words. The next result states partially that.

\begin{proposition}
\label{P:fixedpoint}
Let $f$ be a non-erasing morphism and $a$ be a letter such that $f^\omega(a)$ is a quasiperiodic infinite word but not a periodic word. 
If all letters are growing with respect to $f$ ($\lim_{n \to \infty} |f^n(a)|$ $= \infty$), 
then $f$ is weakly quasiperiodic on infinite words.
\end{proposition}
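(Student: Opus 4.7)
My plan is to leverage Lemma~\ref{L:demiqp} after passing to a sufficiently high iterate of $f$, and then descend from $f^N$ back to $f$.

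Let $q$ be the quasiperiod of $\bw := f^\omega(a)$. Since every letter is growing, there exists an integer $N \geq 1$ with $|f^N(\alpha)| \geq 2|q|$ for every letter $\alpha$. Set $g := f^N$; then $g$ is a non-erasing morphism with $g^\omega(a) = \bw$, so $g(\bw) = \bw$ is $q$-quasiperiodic and $2|q| \leq |g(\alpha)|$ for every letter $\alpha$. These are exactly the hypotheses needed to apply Lemma~\ref{L:demiqp} to the morphism $g$ and the infinite word $\bw$.

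The first conclusion of Lemma~\ref{L:demiqp} would give $\bw = (a_1 \cdots a_k)^\omega$ with pairwise distinct letters, hence $\bw$ periodic, which contradicts the assumption. The second conclusion therefore holds: there exist finite words $x, y, z$ and letters $b, c$ such that the infinite word $\mathbf{u} := xby(cz)^\omega$ is not quasiperiodic while $g(\mathbf{u}) = f^N(\mathbf{u})$ is $q$-quasiperiodic. This already shows that $g = f^N$ is weakly quasiperiodic on infinite words.

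To descend from $f^N$ to $f$, I will consider the finite sequence $\mathbf{u}_0 := \mathbf{u}$, $\mathbf{u}_i := f(\mathbf{u}_{i-1})$ for $1 \leq i \leq N$, so that $\mathbf{u}_N = f^N(\mathbf{u})$. Since $\mathbf{u}_0$ is non-quasiperiodic while $\mathbf{u}_N$ is quasiperiodic, there is a least index $j \in \{1, \ldots, N\}$ with $\mathbf{u}_j$ quasiperiodic; then $\mathbf{u}_{j-1}$ is non-quasiperiodic and $f(\mathbf{u}_{j-1}) = \mathbf{u}_j$ is quasiperiodic, so $f$ itself is weakly quasiperiodic on infinite words. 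The substance of the argument lies inside Lemma~\ref{L:demiqp}; the growth hypothesis is used precisely to make $2|q| \leq |f^N(\alpha)|$ achievable for some iterate $N$, and the non-periodicity of $\bw$ is exactly what is needed to exclude the lemma's first conclusion, so I do not foresee any significant technical obstacle beyond this setup.
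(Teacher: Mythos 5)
Your proof is correct and follows essentially the same route as the paper: choose an iterate $f^N$ large enough that $2|q|\leq|f^N(\alpha)|$ for every letter, apply Lemma~\ref{L:demiqp} to $f^N$ and the fixed point (the non-periodicity excluding the first alternative), and then descend to $f$ — your final descent argument is exactly the paper's Lemma~\ref{L:reducing}, which you simply reprove inline.
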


Observe that the converse of previous proposition does not hold. The
morphism $f$ defined by $f(a) = a$, $f(b) = ba$ does not generate an
infinite quasiperiodic word ($f$ does not generate its fixed point
  $a^\omega$ and $ba^\omega$ is not quasiperiodic), but it is weakly quasiperiodic on infinite words as $f(ab^\omega)$ is $aba$-quasiperiodic.

It is an open {problem} to state Proposition~\ref{P:fixedpoint} for
arbitrary morphims generating a quasiperiodic {infinite word}.

The proof of Proposition~\ref{P:fixedpoint} is a consequence of {Lemma}~\ref{L:demiqp} and the following one.

\begin{lemma}
\label{L:reducing}
Let $f$ be a non-erasing morphism.
If, for some integer $k \geq 1$, the morphism $f^k$ is weakly quasiperiodic, then $f$ is weakly quasiperiodic.
\end{lemma}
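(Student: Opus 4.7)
The plan is to argue by a trivial chain/pigeonhole along the orbit of a witness word under $f$. Suppose $f^k$ is weakly quasiperiodic; pick a non-quasiperiodic word $w$ (finite or infinite, matching the flavor of weak quasiperiodicity at stake) such that $f^k(w)$ is quasiperiodic. I will look at the finite sequence
$$w_0 = w,\ w_1 = f(w),\ w_2 = f^2(w),\ \ldots,\ w_k = f^k(w).$$

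Since $w_0$ is non-quasiperiodic and $w_k$ is quasiperiodic, there is a smallest index $j$ with $1 \leq j \leq k$ for which $w_j$ is quasiperiodic. By minimality of $j$, the word $w_{j-1}$ is non-quasiperiodic, while $f(w_{j-1}) = w_j$ is quasiperiodic. This is exactly a witness that $f$ itself is weakly quasiperiodic, and the lemma follows.

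The only point that deserves a moment's attention is that $w_{j-1}$ must be of the right kind. Since $f$ is non-erasing, if $w$ is a finite non-empty word then every $w_i$ is a finite non-empty word, and if $w$ is infinite then every $w_i$ is infinite. Hence the same argument applies uniformly to the finite-words and infinite-words versions of weak quasiperiodicity, with no need to treat them separately.

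There is no real obstacle here: the lemma is essentially a one-line pigeonhole remark on a chain of length $k$ between a non-quasiperiodic and a quasiperiodic word, and its role in the paper is just to let one replace $f$ by a convenient power $f^k$ when studying, for instance, fixed points of $f$ in the proof of Proposition~\ref{P:fixedpoint}.
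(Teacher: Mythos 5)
Your proof is correct and is essentially the paper's own argument: take the smallest index $j$ for which $f^j(w)$ is quasiperiodic and observe that $f^{j-1}(w)$ is a non-quasiperiodic word whose image under $f$ is quasiperiodic. The remark that non-erasingness preserves the finite/infinite character of the witness is a sensible small addition, but the approach is the same.
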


\begin{proof}
Assume $f^k(\bw)$ is quasiperiodic for some integer $k \geq 1$ and for some non-quasi\-pe\-riodic infinite word $\bw$.
Let $i$ be the smallest integer such that $f^i(w)$ is quasiperiodic. Observe that $i \geq 1$ and that $f^{i-1}(w)$ is not quasiperiodic. As $f^i(\bw) = f(f^{i-1}(\bw))$, $f$ is weakly quasiperiodic on infinite words. 
\end{proof}

\begin{proof2}{Proposition~\ref{P:fixedpoint}}
Let $f$ be a morphism and let $a$ be a letter such that $f^\omega(a)$ is a
quasiperiodic infinite word. Let $q$ be the quasiperiod of $f^\omega(a)$. Assume
that all letters of $f$ are growing. 
As all letters are growing with respect to $f$, for some $k \geq 1$, $f^k$ verifies the hypothesis of Lemma~\mbox{\ref{L:demiqp}}: $f^k$ is weakly
quasiperiodic on infinite words. 
By Lemma~\ref{L:reducing}, $f$ is also weakly quasiperiodic on infinite words. \qed
\end{proof2}

\bibliographystyle{plain}
\bibliography{local}

\newpage

\section*{Appendix}

\begin{proof2}{Lemma~\ref{L:demiqp}}
Let $a$ be the first letter of $\bw$. Immediate consequences of the hypotheses ``{$2|q|\leq |f(\alpha)|$} for all letters {$\alpha$}'',  and ``$f(\bw)$ $q$-quasiperiodic'' are:
\begin{enumerate}
\item for any factor of $\bw$ on the form $au\alpha$ {with $u$ a word and $\alpha$ a letter}, there exists a prefix $p$ of $f(\alpha)$ such that $q$ is a suffix of $p$ and $f(au)p$ is $q$-quasiperiodic.
\item for any factor of $\bw$ on the form $\alpha u \beta$ {with
  $u$ a word and $\alpha, \beta$ letters}, there exist a suffix $s$ of $f(\alpha)$ and a prefix $p$ of $f(\beta)$ such that $q$ is a prefix of $s$ and a suffix of $p$, and such that $sf(u)p$ is $q$-quasiperiodic.
\item for any letter $\alpha$ occurring in $\bw$, if $f(\alpha) = xyz$ with {$x, y, z$ words and} $q$ both a prefix and a suffix of $y$, then $y$ is $q$-quasiperiodic (or $y = q$).
\end{enumerate}
It follows that, if $(a_i)_{i \geq 1}$ is a sequence of letters and $(u_i)_{i \geq 1}$ is a sequence of words such that $a_1 = a$ and for all $i \geq 1$, $a_iu_ia_{i+1}$ is a factor of $\bw$, the word $f(\prod_{i \geq 1} a_i u_i)$ is $q$-quasiperiodic. {In particular, if $au\alpha$ and $\alpha v\alpha$ are particular factors of $\bw$ then $f(au(\alpha v)^\omega)$ is $q$-quasiperiodic, or if $au\alpha$, $\alpha v\beta$ and $\beta w \beta$ are particular factors of $\bw$ then $f(au\alpha v(\beta w)^\omega)$ is $q$-quasiperiodic. For the same reason, if $f(x\alpha y \alpha \bw'$) is $q$-quasiperiodic with $\alpha$ a letter then $f(x\alpha \bw')$ is also $q$-quasiperiodic.}

If the letter $a$ is not recurrent in $\bw$, $\bw$ can be decomposed $\bw = axaubvb\bw'$ or $\bw = aubvb\bw'$ with $a$ that does not occur in $ubvb\bw'$. The word $au(bv)^\omega$ is not quasiperiodic while $f(au(bv)^\omega)$ is $q$-quasiperiodic.

Assume now that the letter $a$ is recurrent. If there exists a letter $b$ that is not recurrent, one can find two factors $axa$ and $aybza$ with $|x|_a = 0$, $|x|_b = 0$, $|ybz|_a = 0$. The word $aybz(ax)^\omega$ is not quasiperiodic while $f(aybz(ax)^\omega)$ is $q$-quasiperiodic.

Now assume that all letters of $\bw$ {are recurrent} and assume that
$\bw$ is not on the form $(a_1\ldots a_k)^\omega$ with $a_1$, \ldots,
$a_k$ pairwise different letters. There must exist two letters $b$ and
$c$, and a word {$v$} such that $bvb$ is a factor of $\bw$ and $|bvb|_c
= 0$. By recurrence, there exist $x$ and $y$ such that $\bw$ has
$bxcyb$ as a factor and $|xcy|_b = 0$. Moreover there
exists a factor $azb$ in $\bw$. The word $azbxcy(bv)^\omega$ is not
quasiperiodic while its image by $f$ is $q$-quasiperiodic.

Now, we observe that in all cases, when {$\bw \neq (a_1\ldots
a_k)^\omega$}{, we} have found words $w_1$, $w_2${,} $w_3$ and letters $\alpha${, } $\beta$ such that $|w_3|_\alpha = 0$, $|w_3|_\beta = 0$, {$w_1\alpha w_2(\beta w_3)^\omega$} is not quasiperiodic and {$f(w_1\alpha w_2(\beta w_3)^\omega)$} is $q$-quasiperiodic.

Observe that if $\alpha$ occurs in $w_1$, say $w_1 = w_4\alpha w_5$, then we can
replace $w_1$ by {$w_4$} with the same result. Thus we can assume
$|w_1|_\alpha = 0$. Similarly, we can assume that $|w_2|_\alpha = 0$
and that each letter occurs at most {once} in each of the words $w_1$,
$w_2$ and $w_3$. \qed
\end{proof2}
\end{document}